\documentclass[a4paper,USenglish,autoref,thm-restate,backref]{lipics-v2021}
\pdfoutput=1
\usepackage{centernot}
\usepackage{shortcuts}
\usepackage[shortcuts]{extdash}

\addto\extrasUSenglish{%
}

\newtheorem{problem}[claim]{Problem}

\title{Fine-Grained Completeness for Optimization in P}
\author{Karl Bringmann}{Saarland University and Max Planck Institute for Informatics,\and Saarland Informatics Campus, Saarbrücken, Germany}{}{}{}
\author{Alejandro Cassis}{Saarland University and Max Planck Institute for Informatics,\and Saarland Informatics Campus, Saarbrücken, Germany}{}{}{}
\author{Nick Fischer}{Saarland University and Max Planck Institute for Informatics,\and Saarland Informatics Campus, Saarbrücken, Germany}{}{}{}
\author{Marvin Künnemann}{Institute for Theoretical Studies, ETH Z\"urich, Switzerland}{}{}{}

\authorrunning{K. Bringmann, A. Cassis, N. Fischer, and M. Künnemann}
\Copyright{Karl Bringmann, Alejandro Cassis, Nick Fischer, and Marvin Künnemann}

\ccsdesc{Theory of computation~Computational complexity and cryptography~Problems, reductions and completeness}
\keywords{Fine-grained Complexity \& Algorithm Design, Completeness, Hardness of Approximation in P, Dimensionality Reductions}
\funding{\emph{Karl Bringmann, Alejandro Cassis, Nick Fischer:} This work is part of the project TIPEA that has received funding from the European Research Council (ERC) under the European Unions Horizon 2020 research and innovation programme (grant agreement No.~850979). \emph{Marvin Künnemann:} Research supported by Dr.\ Max Rössler, by the Walter Haefner Foundation, and by the ETH Zürich Foundation. Part of this research was performed while the author was employed at Max Planck Institute for Informatics.}

\nolinenumbers
\hideLIPIcs

\begin{document}
\maketitle

\begin{abstract}
We initiate the study of fine-grained completeness theorems for exact and approximate optimization in the polynomial-time regime.

Inspired by the first completeness results for decision problems in P (Gao, Impagliazzo, Kolokolova, Williams, TALG 2019) as well as the classic class \MaxSNP{} and \MaxSNP{}-completeness for \NP{} optimization problems (Papadimitriou, Yannakakis, JCSS 1991), we define polynomial-time analogues \MaxSP{} and \MinSP{}, which contain a number of natural optimization problems in P, including Maximum Inner Product, general forms of nearest neighbor search and optimization variants of the $k$-XOR problem. Specifically, we define \MaxSP{} as the class of problems definable as $\max_{x_1,\dots,x_k} \#\{ (y_1,\dots,y_\ell) : \phi(x_1,\dots,x_k, y_1,\dots,y_\ell) \}$, where $\phi$ is a quantifier-free first-order property over a given relational structure (with \MinSP{} defined analogously). On $m$-sized structures, we can solve each such problem in time $O(m^{k+\ell-1})$. 
Our results are:
\begin{itemize}
	\item We determine (a sparse variant of) the Maximum/Minimum Inner Product problem as complete under \emph{deterministic} fine-grained reductions: A strongly subquadratic algorithm for Maximum/Minimum Inner Product would beat the baseline running time of $O(m^{k+\ell-1})$ for \emph{all} problems in \MaxSP{}/\MinSP{} by a polynomial factor.
	\item This completeness transfers to approximation: Maximum/Minimum Inner Product is also complete in the sense that a strongly subquadratic \emph{$c$-approximation} would give a \emph{$(c+\varepsilon)$-approximation} for all \MaxSP{}/\MinSP{} problems in time $O(m^{k+\ell-1-\delta})$, where $\varepsilon > 0$ can be chosen arbitrarily small. Combining our completeness with~(Chen, Williams, SODA 2019), we obtain the perhaps surprising consequence that refuting the \OV{} Hypothesis is \emph{equivalent} to giving a $O(1)$-approximation for all \MinSP{} problems in faster-than-$O(m^{k+\ell-1})$ time.
	\item By fine-tuning our reductions, we obtain mild algorithmic improvements for solving and approximating all problems in \MaxSP{} and \MinSP{}, using the fastest known algorithms for Maximum/Minimum Inner Product.
\end{itemize}
\end{abstract}

\section{Introduction}
For decades, increasingly strong hardness of approximation techniques have been developed to pinpoint the best approximation guarantees achievable in polynomial time. Among the early successes of the field, we find the \MaxSNP{} completeness theorems by Papadimitriou and Yannakakis~\cite{PapadimitriouY91}, giving the first strong evidence against PTASes for \MaxSAT{} and related problems. Such completeness theorems constitute valuable tools in complexity theory: Generally speaking, proving a problem $A$ to be complete for a class $\class$ shows that $A$ is \emph{the} representing problem for~$\class$. The precise notion of completeness is typically chosen such that a certain algorithm for $A$ would yield unexpected algorithms for the whole class $\class$ -- thus establishing that $A$ is unlikely to admit such an algorithm. However, a completeness result may also open up algorithmic uses. Namely, since any problem in $\class$ can be reduced to its complete problem~$A$,  we may find (possibly mildly) improved algorithms for all problems in~$\class$ by making algorithmic progress on the single problem $A$.

Given this usefulness, it may be surprising that there are currently no completeness results for studying optimization barriers \emph{within the polynomial-time regime}, e.g., for approximability in strongly subquadratic time (in fact, even for studying decision problems, completeness results are an exception rather than the norm, see~\cite{VassilevskaW18} for a recent survey of the field). Thus, this work sets out to initiate the quest for completeness results for optimization in P, which corresponds to studying the (in-)approximability of problems on large data sets.

\paragraph*{Previous Completeness Results in P}
The essentially only known completeness result in fine-grained complexity theory in P is a recent result by Gao, Impagliazzo, Kolokolova, and Williams~\cite{GaoIKW18}: The orthogonal vectors problem (OV)\footnote{Given two sets of $n$ vectors in $\{0,1\}^d$, determine whether there exists a pair of vectors, one of each set, that are orthogonal.} is established as complete problem for the class of model-checking first-order properties\footnote{Let $\phi$ be a first-order property (in prenex normal form) over a relational structure of size $m$. Given the structure, determine whether $\phi$ holds. See Section~\ref{sec:preliminaries} for details.} under fine-grained reductions\footnote{For a formal definition of fine-grained reductions, see~\cite{CarmosinoGIMPS16, GaoIKW18}. For this paper, the reader may think of the following slightly simpler notion: A fine-grained reduction from a problem $P_1$ with presumed time complexity $T_1$ to a problem $P_2$ with presumed time complexity $T_2$ is an algorithm $A$ for $P_1$ that has oracle access to $P_2$ and whenever we use an $O(T_2(n)^{1-\delta})$ algorithm for the calls to the $P_2$-oracle (for some $\delta > 0$), there is a $\delta' > 0$ such that $A$ runs in time $O(T_1(n)^{1-\delta'})$.}. From this completeness, they derive in particular:
\begin{itemize}
\item \textbf{\textsf{Hardness:}} If there are $\gamma, \delta >0$ such that OV with moderate dimension $d=n^\gamma$ can be solved in time $O(n^{2-\delta})$, then there is some $\delta' > 0$ such that all $(k+1)$-quantifier first-order properties can be model-checked in time $O(m^{k-\delta'})$ for $k \geq 2$. The negation of this statement's premise is known as the moderate-dimensional OV Hypothesis; the consequence would be very surprising, as model-checking first-order properties is a very general class of problems for which no $O(m^{k-\delta})$-time algorithm is known. This result can be seen as support for the moderate-dimensional OV Hypothesis. 
\item \textbf{\textsf{Algorithms:}} Using a stronger notion than fine-grained reductions, Gao et al.\ also prove that mildly subquadratic algorithms for OV have algorithmic consequences for model-checking first-order properties. Specifically, by combining their reductions with the fastest known algorithm for OV~\cite{AbboudWY15, ChanW16}, they obtain an $m^{k}/2^{\Omega(\sqrt{\log m})}$-time algorithm for model-checking any $(k+1)$-quantifier first-order property. 
\end{itemize}
No comparable fine-grained completeness results are known for polynomial-time optimization problems, raising the question:
Can we give completeness theorems also for a general class of optimization problems in P, both for exact and approximate computation?

\paragraph*{Hardness of Approximation in P}
Studying the fine-grained approximability of polynomial-time optimization problems (hardness of approximation in P), is a recent and influential trend: After a breakthrough result by Abboud, Rubinstein, and Williams~\cite{AbboudRW17} establishing the \emph{Distributed PCP in P} framework, a number of works gave strong conditional lower bounds, including results for nearest neighbor search~\cite{Rubinstein18} or a tight characterization of the approximability of maximum inner product~\cite{Chen18, ChenW19}. Further results include work on approximating graph problems~\cite{RodittyVW13,BackursRSVWW18, BringmannKW19, KarthikLM19}, the Fréchet distance~\cite{Bringmann14}, LCS~\cite{AbboudB17, AbboudR18}, monochromatic inner product~\cite{KarthikM19}, earth mover distance~\cite{Rohatgi19}, as well as equivalences for fine-grained approximation in P~\cite{ChenW19, ChenGLRR19, BringmannKW19}. Related work studies the inapproximability of parameterized problems, ruling out certain approximation guarantees within running time $f(k) n^{g(k)}$ under parameter $k$ (such as FPT time $f(k)\poly(n)$, or $n^{o(k)}$), see~\cite{FeldmannKLM20} for a recent survey.\footnote{Note that these parameterized inapproximability results do not necessarily apply to the case of a fixed parameter~$k$, which would correspond to our setting. See~\cite{KarthikLM19} for an interesting exception.}
	 
\paragraph*{An Optimization Class: Polynomial-Time Analogues of MaxSNP}
We define a natural and interesting class of polynomial-time optimization problems, inspired by the approach of Gao et al.~\cite{GaoIKW18} as well as the classic class \MaxSNP{} introduced by Papadimitriou and Yannakakis~\cite{PapadimitriouY91} to study the approximability of NP optimization problems.

The definition of \MaxSNP{} is motivated by Fagin's theorem (see, e.g., \cite{Immerman99, GraedelKLMSVVW07}), which characterizes $\NP$ as the family of problems expressible as $\exists S\, \forall\bar{y}\, \exists\bar{z}\, \phi(\bar{y},\bar{z},G,S)$ where $G$ is a given relational structure, $\exists S$ ranges over a relational structure $S$ and $\forall\bar{y}\, \exists\bar{z}\, \phi(\bar{y},\bar{z},G,S)$ is a $\forall^* \exists^*$-quantified first-order property. A subclass of this is $\SNP$, which consists of those problems expressible without the $\exists\bar{z}$-part. Its natural optimization variant is $\MaxSNP$, defined as the set of problems expressible as $\max_{S}\#\{\bar{y} : \phi(\bar{y},G,S)\}$. Notably, this class of problems contains central optimization problems (\MAXThreeSAT{}, \MaxCut{}, etc.), all of which admit a constant-factor approximation in polynomial time. Using a notion of $\MaxSNP$-completeness, Papadimitriou and Yannakakis identified several problems (including \MAXThreeSAT{} and \MaxCut{}) as hardest-to-approximate in this class, giving a justification for the lack of a PTAS for these problems.\footnote{A stronger justification was later given by the PCP theorem, establishing inapproximability even under $\P \ne \NP$. In general, these two approaches (approximation-preserving completeness theorems as well as proving inapproximability under established assumptions on exact computation) can result in incomparable hardness of approximation results.}

To study the same type of questions in the polynomial-time regime, the perhaps most natural approach is to restrict the syntax defining \MaxSNP{} problems such that it solely contains polynomial-time problems. Specifically, we replace $\max_S$ by a maximization over a bounded number of $k$ variables $x_1,\dots,x_k$ and restrict the counting operator to tuples $\bar{y} = (y_1,\dots,y_\ell)$ of bounded length $\ell$. The resulting formula $\max_{x_1,\dots,x_k} \#\{(y_1,\dots,y_\ell) : \phi(x_1,\dots,x_k,y_1,\dots,y_\ell)\}$ can be easily seen (see Appendix~\ref{sec:baseline}) to be solvable in time $O(m^{k+\ell-1})$, where $m$ denotes the problem size. We define $\MaxSP_{k,\ell}$ to denote the class of these optimization problems and let $\MaxSP = \bigcup_{k\ge 2, \ell \ge 1} \MaxSP_{k,\ell}$. Note that here, ``$\mathrm{SP}$'' stands for ``strict P'' in analogy to the name ``strict NP'' of $\SNP$. We refer to Section~\ref{sec:preliminaries} for more details.

We obtain an analogous minimization class \MinSP{} by replacing $\max$ by $\min$ everywhere. These classes include interesting problems:

\begin{itemize}
\item Vector-definable problems: Let $\Sigma = \{0,\dots, c\}$ be a fixed alphabet and $f: \Sigma^k \to \{0,1\}$ be an arbitrary Boolean function. Then we can express the following problem: Given sets $X_1,\dots, X_k$ in $\Sigma^d$ of vectors, maximize (or minimize) $\sum_{i=1}^d f(x_1[i],\dots,x_k[i])$ over all $x_1\in X_1, \dots, x_k\in X_k$. Each such problem is definable in $\MaxSP_{k,1}$/$\MinSP_{k,1}$, e.g.:
\begin{itemize}
\item Maximum Inner Product (\MaxIP{}): Given sets $X_1,X_2 \subseteq \{0,1\}^d$, maximize the inner product $x_1 \cdot x_2$ over $x_1\in X_1,x_2\in X_2$. To see that this problem is in $\MaxSP_{2,1}$, consider the formula  $\max_{x_1\in X_1,x_2\in X_2} \#\{y \in Y : E(x_1,y) \wedge E(x_2,y)\}$, where $E(x,y)$ indicates that the $y$-th coordinate of $x$ is equal to $1$.\item Consider minimization with $k=2$ and view $f: \Sigma^2 \to \{0,1\}$ as classifying pairs of characters as \emph{similar} (0) or \emph{dissimilar} (1). This expresses the following problem that generalizes the nearest-neighbor problem over the Hamming metric: Given a set of length-$d$ strings over $\Sigma$, determine the most similar pair of strings by minimizing the number of dissimilar characters.
\item View $\Sigma$ as the finite field $\mathbb{F}_q$ and let $f(z_1,\dots,z_k) = 1$ iff $\sum_{i=1}^k z_i \equiv 0\pmod q$. This gives optimization variants of the \kXOR{k} problem~\cite{JafargholiV16,DietzfelbingerSW18}, generalized to arbitrary finite fields. 
\end{itemize}
\item Beyond vector-definable problems, in $\MaxSP_{2,\ell-2}$ we can express the graph problem of computing, over all edges $e$, the maximum number of length-$\ell$ circuits containing $e$: \[\max_{x_1,x_2} \#\{ (y_1,\dots, y_{\ell-2}) : E(x_1,x_2) \wedge E(x_2,y_1) \wedge \cdots \wedge E(y_{\ell-3},y_{\ell-2}) \wedge E(y_{\ell-2}, x_1)\}.\]
	In fact, $\MaxSP$ also contains generalizations of this problem to other pattern graphs than length-$\ell$ circuits (e.g., length-$\ell$ cycles or $\ell$-cliques), even arbitrary fixed patterns in \emph{hypergraphs}.
\end{itemize}
We let $m$ denote the size of the relational structure, that is, the number of tuples in an explicit representation of all relations. For vector-definable examples, the input can be represented as a relational structure of size $m=O(n d\log |\Sigma|)$, which is the natural input size. Note, however, that the relational structure also allows us to succinctly encode \emph{sparse} vectors in very large dimension (such as $d=\Theta(n)$), which is why we often refer to \MaxSP{} and \MinSP{} as describing a \emph{sparse setting}.
It is easy to see that each \MaxSP{} or \MinSP{} formula $\psi$ can be solved in time $O(m^{k+\ell-1})$ (see Appendix~\ref{sec:baseline}); note that for a fixed $\psi$, $k$ and $\ell$ always denote the number of maximization/minimization and counting variables, respectively. Can we obtain completeness results with respect to improvements over this baseline running time?

\paragraph*{(Sparse) Maximum Inner Product}
Our results prove the Maximum Inner Product problem (\MaxIP{}) as representative for the class \MaxSP{}. We will formally introduce two important variants of this problem.

\begin{problem}[\MaxIP{}]
Given two sets of $n$ vectors $X_1, X_2 \subseteq \{0, 1\}^d$, the task is to compute the maximum inner product $\langle x_1, x_2\rangle = \sum_j x_1[j] \mult x_2[j]$ for $x_1 \in X_1, x_2 \in X_2$.
\end{problem}

When $d = n^\gamma$ for some (small) $\gamma > 0$, we speak of the \emph{moderate-dimensional} \MaxIP{} problem. In this paper, we also use \MaxIP{} in another context, depending on the input format. To make the distinction explicit, let us formally introduce the \emph{Sparse} Maximum Inner Product problem (\SparseMaxIP{}):

\begin{problem}[\SparseMaxIP{}]
Given two sets of $n$ vectors $X_1, X_2 \subseteq \{0, 1\}^d$, sparsely represented as a list of pairs $(x_i, j)$ which represent the one-coordinates $x_i[j] = 1$, the task is to compute the maximum inner product $\langle x_1, x_2\rangle$ for $x_1 \in X_1, x_2 \in X_2$.
\end{problem}

For moderate-dimensional \MaxIP{} we measure the complexity in $n$ and for \SparseMaxIP{} we measure the complexity in $m$, the total number of one-coordinates. We note that \SparseMaxIP{} is also special in our setting as this problem can be seen as a member of $\MaxSP_{2,1}$. Indeed, \SparseMaxIP{} is the same problem as maximizing the formula
\begin{equation*}
    \psi = \max_{x_1 \in X_1, x_2 \in X_2} \#\{y \in [d] : E(x_1, y) \land E(x_2, y) \}, 
\end{equation*}
where $E(x_i, y)$ indicates that the $y$-th coordinate of $x_i$ is equal to $1$. We also define the (Sparse) Minimum Inner Product problems (\MinIP{}, \SparseMinIP{}) as the analogous problems with the task to minimize $\innerprod{x_1}{x_2}$.

\subsection{Our Results}
Our first main result is a completeness theorem for exact optimization, establishing Maximum Inner Product as complete for \MaxSP{} (and Minimum Inner Product for \MinSP{}).

\begin{theorem}[\SparseMaxIP{} is \MaxSP{}-complete] \label{thm:completeness-ip}
\SparseMaxIP{} is complete for the class \MaxSP{} under fine-grained reductions: If there is some $\delta > 0$ such that \SparseMaxIP{} can be solved in time $O(m^{2-\delta})$, then for every $\MaxSP_{k,\ell}$ formula $\psi$, there is some $\delta'>0$ such that $\psi$ can be solved in time $O(m^{k+\ell-1-\delta'})$.

The analogous statement holds for minimization, if we replace \SparseMaxIP{} and \MaxSP{} by \SparseMinIP{} and \MinSP{}, respectively.
\end{theorem}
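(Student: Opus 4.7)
My plan is to follow the Gao-Impagliazzo-Kolokolova-Williams paradigm for reducing first-order model checking to \OV{}, adapted to the counting/maximization setting of \MaxSP{}. As a preprocessing step, since $\phi$ has constant size, I would rewrite it in DNF of constant size and distribute the counting operator across disjuncts by inclusion-exclusion, so that it suffices to handle a conjunctive $\phi$. Next, I would strip maximization variables one by one: for $k \geq 3$, observe that $\max_{x_1,\dots,x_k} = \max_{x_k} \max_{x_1, \dots, x_{k-1}}$, so enumerating the value of $x_k$ over the $m$-element universe yields $m$ instances of $\MaxSP_{k-1, \ell}$ on the same structure (with $x_k$ replaced by a constant in $\phi$), whose answers combine by a final $\max$. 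Iterating this peeling, it suffices to reduce the base case $\MaxSP_{2,\ell}$ to \SparseMaxIP{}.

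The core technical step is the reduction from $\MaxSP_{2,\ell}$ to \SparseMaxIP{}. I would construct sparse vectors $v_{x_1}$ and $w_{x_2}$ indexed by the two maximization variables, over a coordinate space that encodes (sparsely) the counting variables $\bar y$, so that $\langle v_{x_1}, w_{x_2}\rangle = |\{\bar y : \phi(x_1, x_2, \bar y)\}|$; feeding this into the \SparseMaxIP{} oracle then returns $\max_{x_1, x_2}$ of this count, which is exactly the MaxSP value. For the inner product to factorize across the $(x_1, x_2)$ split, I would classify each atom of $\phi$ as ``left-pure'' (involving only $x_1, \bar y$), ``right-pure'' (involving only $x_2, \bar y$), or ``mixed'' (involving both $x_1$ and $x_2$). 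Mixed atoms are handled in the GIKW style by appending auxiliary coordinates to both vectors so that each mixed atom contributes $1$ to the inner product exactly when witnessed. Setting $v_{x_1}[\bar y] = [\text{left-pure atoms satisfied by } (x_1, \bar y)]$ and symmetrically for $w_{x_2}$, the inner product collapses to the desired count.

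The chief difficulty is sparsity: a naive coordinate space $[m]^\ell$ yields vectors of total size $\Theta(m^{\ell+1})$ and no gain from the assumed $O(m^{2-\delta})$ oracle. To address this, I would not put all of $\bar y$ into coordinates but split $\bar y$ between the vector indices (so that we also maximize over partial $\bar y$-assignments, and handle the induced mismatch with the summation by further enumeration) and the coordinate space, choosing the split in a ``balanced'' way driven by the atom hypergraph of $\phi$. Combined with a degree-based partition of the structure (separating heavy tuples, which are handled by brute force, from light tuples, which keep the sparse vectors truly sparse), I expect the \SparseMaxIP{} instance to have total size $O(m^{(\ell+1)/2})$ up to $m^{o(1)}$ factors, so that the oracle runs in time $O(m^{(\ell+1)(1-\delta/2)}) = O(m^{\ell+1-\delta'})$. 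Controlling sparsity in the presence of arbitrarily many mixed atoms, uniformly over all quantifier-free $\phi$, is the main obstacle. Finally, the \MinSP{}/\SparseMinIP{} statement follows by a symmetric construction: the same reduction template works with $\max$ replaced by $\min$, and no essentially new ideas are required.
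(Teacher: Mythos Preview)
Your proposal has a genuine gap at the core step. First, a structural point: the paper shows (and you could too, via triangle counting) that every $\OptSP_{k,\ell}$ formula with $\ell \geq 2$ already admits an $O(m^{k+\ell-3/2})$ exact algorithm, so the only interesting case is $\ell = 1$. But for $\ell = 1$ there is a single counting variable $y$, and your sparsity plan of ``splitting $\bar y$ between the vector indices and the coordinate space'' has nothing to split. You are then left with the naive encoding $v_{x_1}[y] = [\text{left-pure atoms hold}]$, $w_{x_2}[y] = [\text{right-pure atoms hold}]$, and here the real obstacle appears: when an atom occurs \emph{negated}, say $\phi = E(x_1,y)\wedge \neg E(x_2,y)$, the vector $w_{x_2}[y] = [\neg E(x_2,y)]$ is the complement of a sparse relation and is generically \emph{dense}, of total size $\Theta(n\cdot|Y|)$, which can be $\Theta(m^2)$. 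Feeding such an instance to an $O(m^{2-\delta})$ \SparseMaxIP{} oracle yields no saving at all. The paper's proof is built around exactly this obstruction: its key technical contribution is a deterministic \emph{universe reduction} (a Bloom-filter-like hash via residues modulo $t$ primes) that shrinks $|Y|$ to $\widetilde O(s(m)^4)$ while controlling the additive error in every count, after which complementation is affordable; the error is then eliminated by rounding to the nearest multiple of $t$. Your proposal contains no analogue of this step, and the degree-based heavy/light split you mention cannot repair it, since even when every $x_2$ is light the complement $\neg E(x_2,\cdot)$ is dense.

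Two smaller issues compound this. Your inclusion--exclusion preprocessing writes $\#\{\bar y:\phi\}$ as a signed sum of conjunctive counts, but $\max$ does not distribute over such sums, so you cannot ``handle a conjunctive $\phi$'' in isolation; the paper instead keeps all $2^k$ types simultaneously in a single \emph{Hybrid Problem} and only converts to pure \MaxIP{} after the universe is small. And your treatment of ``mixed'' atoms $R(x_1,x_2)$ via GIKW-style auxiliary coordinates is additive (it adds an indicator to the inner product), whereas what you need is multiplicative (a false cross atom must zero out the entire count); the paper handles cross predicates separately by showing that any formula with a positive conjunct $E(x_i,x_j)$ is solvable in $O(m^{k-1/2})$ time, then strips them out before the vector encoding.
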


Turning to the approximability of \MaxSP{} and \MinSP{}, we show how to obtain a fine-grained completeness that even preserves approximation factors (up to an arbitrarily small blow-up). Here and throughout the paper, we say that an algorithm gives a $c$-approximation for a maximization problem if it outputs a value in the interval $[c^{-1}\cdot\OPT, \OPT]$, where $\OPT$ is the optimal value. For minimization, the algorithm computes a value in the interval $[\OPT, c\cdot \OPT]$. 

\begin{theorem}[\SparseMaxIP{} is \MaxSP{}-complete, (almost) approximation preserving] \label{thm:completeness-ip-approx}
Let $c\ge 1$ and $\varepsilon > 0$. If there is some $\delta > 0$ such that \SparseMaxIP{} can be $c$-approximated in  time $O(m^{2-\delta})$, then for every $\MaxSP_{k,\ell}$ formula $\psi$, there is some $\delta'>0$ such that $\psi$ can be $(c+\varepsilon)$-approximated in time $O(m^{k+\ell-1-\delta'})$.

The analogous statement for minimization holds for \SparseMinIP{} and \MinSP{}. 
\end{theorem}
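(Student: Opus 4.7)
The plan is to re-use the reduction from Theorem~\ref{thm:completeness-ip} and verify that it preserves approximation ratios up to a controllable multiplicative slack of $(1+\varepsilon/c)$. Given a $\MaxSP_{k,\ell}$ formula $\psi$, that reduction produces a collection of \SparseMaxIP instances $I_1,\dots,I_N$ such that $\OPT(\psi)$ is expressed in terms of $\{\OPT(I_j)\}_j$ via max-taking and simple monotone post-processing. The guiding principle is that $\max$ and $+$ preserve $c$-approximations: if each $\OPT(I_j)$ is replaced by some $v_j \in [c^{-1}\OPT(I_j),\OPT(I_j)]$, then the induced value for $\psi$ still lies in $[c^{-1}\OPT(\psi),\OPT(\psi)]$. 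So long as every intermediate step aggregates subproblem optima monotonically, a $c$-approximation oracle for \SparseMaxIP transports directly to a $c$-approximation for $\psi$.

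Concretely, one audits the reduction step by step. It should decompose into: (i) enumerating $k-2$ of the maximization variables to reduce to $\MaxSP_{2,\ell}$, recombining via $\max$; (ii) ``tensorizing'' the $\ell$ counting variables $(y_1,\dots,y_\ell)$ into a single flattened coordinate so that counting becomes an inner product; (iii) splitting the literals of $\phi$ between the $x_1$-side and $x_2$-side to build the two families of inner-product vectors; and (iv) a sparsification/balancing step that compresses each instance to the required total sparsity $O(m^{(k+\ell-1)/2})$. Steps (i)--(iii) preserve the relevant optima exactly, via the identity $\innerprod{v_{x_1}}{v_{x_2}} = \#\{\bar y : \phi(x_1,x_2,\bar y)\}$ baked into the construction; hence they transport $c$-approximations verbatim.

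The expected source of the $+\varepsilon$ slack is step (iv), together with any thresholding used to bound instance sizes or to relate $\OPT(\psi)$ to decision-style queries of the form ``is $\OPT(I_j)\geq T$?''. To approximate, we binary-search over $T \in \{1,(1+\varepsilon'),(1+\varepsilon')^2,\dots,m^\ell\}$ using the $c$-approximation oracle for \SparseMaxIP; this costs $O(\ell\log(m)/\varepsilon')$ extra calls per subproblem and loses a factor of $(1+\varepsilon')$ in the approximation. Choosing $\varepsilon' := \varepsilon/c$ yields the final factor $c(1+\varepsilon/c) = c+\varepsilon$, and the $\mathrm{poly}\log(m)/\varepsilon$ overhead is absorbed by the subquadratic speedup of the oracle, preserving the time bound $O(m^{k+\ell-1-\delta'})$. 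The minimization case is symmetric, with $[\OPT,c\cdot\OPT]$ and $\min$ replacing $[c^{-1}\OPT,\OPT]$ and $\max$. The main obstacle is to confirm that every step of the base reduction really acts monotonically on subproblem optima -- any non-monotone aggregation (e.g.\ subtraction or inclusion--exclusion) would amplify the approximation error and would have to be re-engineered, possibly at the cost of further overhead that must still fit within the target running time.
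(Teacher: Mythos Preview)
Your plan rests on a guessed decomposition of the base reduction that does not match the actual one, and the guess has a real gap at step~(iii). For a general quantifier-free $\phi$ there is no way to write $\#\{\bar y : \phi(x_1,x_2,\bar y)\}$ directly as an inner product $\innerprod{v_{x_1}}{v_{x_2}}$ by ``splitting literals'': already $\phi = E(x_1,y)\lor E(x_2,y)$, or any $\phi$ that is satisfied by some $y$ with $E(x_1,y)=E(x_2,y)=0$, fails this. To force inner-product shape one must complement relations, and complementing a sparse relation of size $m$ over a universe of size $|U|$ can produce a relation of size $n|U|$, destroying the sparsity budget. This is precisely the ``non-monotone aggregation'' you flag as a worry; it is not a side case but the central obstacle, and your step~(iv) ``sparsification/balancing'' does not say how to overcome it.

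What the paper actually does, and where the $+\varepsilon$ actually enters, is different. First, $\ell\ge 2$ is disposed of by a direct $O(m^{k+\ell-3/2})$ exact algorithm, so only $\ell=1$ matters. For $\ell=1$ the formula is normalized (removing hyperedges, cross edges, parallel edges, unary predicates) into a \emph{Hybrid Problem}: a sum over types $\tau\in\{0,1\}^k$ of constraints of the form $|U_\tau\cap\bigcap_{\tau[i]=1}S_i\setminus\bigcup_{\tau[i]=0}S_i|$. One then applies a deterministic \emph{universe reduction} via prime-modular hashing that shrinks $|U|$ to $\widetilde O(s(m)^4\log^2 m)$, after which complementation is affordable and the instance becomes low-dimensional $k$-\textsc{MaxIP}. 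The hashing is not perfect; it guarantees only $|t\cdot\Val(S_1,\dots,S_k)-\Val(S_1',\dots,S_k')-\Delta| < \varepsilon$ for a computable offset $\Delta\ge 0$. Calling the $c$-approximation oracle on the hashed instance and returning $\lceil(\ALG'+\Delta)/t - \varepsilon\rceil$ then yields a $\frac{c}{1-2\varepsilon}$-approximation. So the $+\varepsilon$ is the price of the dimension reduction that makes complementation cheap; there is no binary search, and your thresholding analysis is not where the slack comes from.
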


As a key technical step to obtain Theorems~\ref{thm:completeness-ip} and~\ref{thm:completeness-ip-approx}, we prove a \emph{universe reduction} for \MaxSP{}/\MinSP{}
formulas (detailed in Sections~\ref{sec:technical-overview} and~\ref{sec:universe-reduction}). Along the way, this universe reduction establishes the following fine-grained equivalence between the sparse and moderate-dimensional settings of \MaxIP{}/\MinIP{}. 

\begin{theorem}[Equivalence between \MaxIP{} and \SparseMaxIP{}]\label{thm:equivalence-ip}
\hspace{0cm}
\begin{itemize}
\item There are some $\gamma, \delta > 0$ such that \MaxIP{} with dimension $d=n^\gamma$ can be solved in time $\Order(n^{2-\delta})$ if and only if there is some $\delta' > 0$ such that \SparseMaxIP{} can be solved in time $\Order(m^{2-\delta'})$.
\item Let $c> 1$ and $\varepsilon > 0$. If there are some $\gamma, \delta > 0$ such that \MaxIP{} with dimension $d=n^\gamma$ can be $c$-approximated in  time $\Order(n^{2-\delta})$ then there is some $\delta' > 0$ such that \SparseMaxIP{} can be $(c+\varepsilon)$-approximated in time $\Order(m^{2-\delta'})$. Conversely, if there is some $\delta > 0$ such that \SparseMaxIP{} can be $c$-approximated in time $O(m^{2-\delta})$ then there 
are some $\gamma, \delta' > 0$ such that \MaxIP{} with dimension $d = n^{\gamma}$ can be $c$-approximated in time $O(n^{2-\delta'})$.
\end{itemize}
The analogous statements for minimization hold for \MinIP{}.
\end{theorem}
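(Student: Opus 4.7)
I would prove the equivalence by handling the two directions separately, isolating the technical core in a \emph{universe reduction} from \SparseMaxIP{} to moderate-dimensional \MaxIP{}. The easy direction (moderate-dimensional \MaxIP{} reducing to \SparseMaxIP{}) is essentially padding: an $O(m^{2-\delta'})$ algorithm for \SparseMaxIP{} solves a moderate-dimensional \MaxIP{} instance, whose sparse representation has size $m \le n \cdot d = n^{1+\gamma}$, in time $O(n^{(1+\gamma)(2-\delta')})$; picking $\gamma$ small enough (depending on $\delta'$) makes this $O(n^{2-\delta})$ for some $\delta > 0$. The padding preserves inner products exactly, so both the exact and the approximation versions follow without any loss in the approximation factor, and the argument transfers verbatim to \MinIP{}.

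For the hard direction, I would case-split on $n$ relative to $m$. If $n$ is small, say $n \le m^{1-\eta}$ for a suitable $\eta > 0$, a direct sparse algorithm that iterates over coordinates $j$ and updates for each ordered pair $(x_1, x_2)$ with $x_1[j] = x_2[j] = 1$ runs in $O(m + \sum_j n_j^{(1)} n_j^{(2)}) = O(m \cdot n) = O(m^{2-\eta})$ time, which is already subquadratic. If $n$ is large, the task is to reduce the coordinate universe $[d]$ down to size $n^\gamma$ while (approximately) preserving the maximum inner product, after which the hypothesized moderate-dimensional \MaxIP{} algorithm applies directly to the reduced instance.

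The universe reduction is the heart of the proof. I would pick a random hash $h : [d] \to [n^\gamma]$ and map each sparse vector $x$ to the projected vector $x'$ with $x'[b] = \bigvee_{j : h(j) = b} x[j]$. The projected inner product dominates the true one but may inflate due to collisions. To control this, I would partition the vectors into $O(\log n)$ dyadic sparsity classes (so that within a class the sparsity is bounded by some $s$, which in the large-$n$ regime is not much larger than $m/n$), and choose $n^\gamma \gg s^2 \cdot (\log n)^{O(1)}$ so that for any fixed pair within a class the expected number of collisions is $o(1)$, making the projection effectively collision-free with high probability. This directly yields the approximation version: enlarging $n^\gamma$ slightly lets us ensure the collision-inflation is at most a $(1 + \varepsilon/c)$ factor, upgrading a $c$-approximation for moderate-dimensional \MaxIP{} to a $(c+\varepsilon)$-approximation for \SparseMaxIP{}.

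The main obstacle is the \emph{exact} setting, where we cannot tolerate any inflation of the reported optimum: the oracle may return a pair whose projected inner product exceeds the true maximum. My plan is to combine the above projection with $O(\log n)$ independent repetitions and a verification phase that directly computes the true sparse inner product of the top candidates returned by the oracle, so as to certify the true optimum with high probability. Designing this verification so that the total overhead remains subquadratic — and ideally derandomizing the hashing (e.g.\ via perfect hash families or splitters) to yield a deterministic reduction as required by Theorem~\ref{thm:completeness-ip} — is the step I expect to require the most care.
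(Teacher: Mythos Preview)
Your easy direction (\SparseMaxIP{} $\Rightarrow$ moderate-dimensional \MaxIP{}) is exactly the paper's argument.

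For the hard direction your outline diverges from the paper and, in the exact case, has a real gap. First, a small but telling slip: with the OR-projection $x'[b]=\bigvee_{h(j)=b}x[j]$, the projected inner product does \emph{not} dominate the true one. If two coordinates $j_1\ne j_2$ with $x_1[j_1]=x_2[j_1]=x_1[j_2]=x_2[j_2]=1$ hash to the same bucket, the contribution drops from $2$ to $1$; collisions can inflate \emph{or} deflate. This is fixable for approximation (you control collisions anyway), but it foreshadows the real problem.

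The genuine gap is your exact-case plan of ``repetitions plus verification of top candidates''. Even if the oracle returns a witness pair, verifying that pair only tells you its true inner product; it does not certify optimality. With $n^2$ pairs, you cannot make the per-pair collision probability $o(n^{-2})$ with a target dimension $n^\gamma$, so in every run many suboptimal pairs will have inflated projected values above the (possibly deflated) projection of the true optimum. The oracle then hands you one of those; you verify, see a value below what you hoped, and learn nothing about $\OPT$. Enumerating \emph{all} pairs whose projection exceeds a threshold is not subquadratic in general, and repetitions do not help because the bad events are not rare.

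The paper sidesteps this entirely with a different hash design: each coordinate is mapped to $t$ buckets (deterministically, via $t$ primes of size $\Theta(t\log t)$), so for every pair the projected inner product equals $t\cdot\langle x_1,x_2\rangle$ up to an additive error bounded by the number of collisions, which is deterministically at most $|S_1|\,|S_2|\log|U|$. After brute-forcing heavy vectors, all remaining vectors have sparsity $\le s$, and choosing $t$ a constant times $s^2\log m$ makes the error $<t/2$ for \emph{every} pair simultaneously. Hence $\OPT$ is recovered exactly by rounding the projected optimum to the nearest multiple of $t$ --- no verification, no randomness, and the same argument (rounding up or down) gives the $(c+\varepsilon)$-approximation. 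This ``scale by $t$ then round'' idea is precisely the missing ingredient in your plan, and it also answers your derandomization question for free.
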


We prove Theorems~\ref{thm:completeness-ip},~\ref{thm:completeness-ip-approx} and~\ref{thm:equivalence-ip} in \autoref{sec:consequences}.

\paragraph*{Consequences for Hardness of Approximation}
As a consequence of the above completeness results and dimension reduction, we obtain the following statements.
\begin{itemize}
\item Since Maximum Inner Product and Minimum Inner Product are subquadratic equivalent in moderate dimensions~\cite[Theorem 1.6]{ChenW19}, we obtain from Theorems~\ref{thm:completeness-ip} and~\ref{thm:equivalence-ip} that a strongly subquadratic algorithm solving moderate-dimensional Maximum Inner Product \emph{exactly} would give a polynomial-factor improvement over the $O(m^{k+\ell-1})$ running time for all \MaxSP{} \emph{and} \MinSP{} formulas. This adds an additional surprising consequence of fast Maximum Inner Product algorithms, besides refuting the Orthogonal Vectors Hypothesis.
\item There is a $O(1)$-approximation beating the quadratic baseline for moderate-dimensional Maximum Inner Product \emph{if and only if} there is a $O(1)$-approximation beating the $O(m^{k+\ell-1})$ time baseline for all \MaxSP{} formulas. To obtain this result combine the fine-grained equivalence of $O(1)$-approximation of moderate-dimensional \MaxIP{} and \SparseMaxIP{} (\autoref{thm:equivalence-ip}) with the completeness of \SparseMaxIP{} (\autoref{thm:completeness-ip-approx}). This adds an additional consequence of fast Maximum Inner Product approximation, besides refuting SETH~\cite{AbboudRW17,Chen18}.
\item In the minimization world, we obtain a tight connection between approximating \MinSP{} formulas and \OV{}: The (moderate-dimensional) \OV{} hypothesis is \emph{equivalent} to the non-existence of a $O(1)$-approximation for all \MinSP{} formulas in time $O(m^{k+\ell-1})$. To obtain this result, combine the equivalence of moderate-dimensional \OV{} Hypothesis and non-existence of a $O(1)$-approximation for moderate-dimensional \MinIP{}~\cite[Theorem~1.5]{ChenW19} with the equivalence of $O(1)$-approximation algorithms for moderate-dimensional \MinIP{} and \MinSP{} (Theorem~\ref{thm:completeness-ip-approx} and \autoref{thm:equivalence-ip}). Interestingly, this can be seen as additional support for the Orthogonal Vectors Hypothesis. 
\end{itemize}

\paragraph*{Algorithms: Lower-Order Improvements}
Since Maximum Inner Product has received significant interest for improved algorithms (see particularly~\cite{Chen18, ChenW19}), we turn to the question whether our completeness result also yields  lower-order algorithmic improvements
for all problems in the class. Indeed, by combining the best known Maximum/Minimum Inner Product algorithms with our reductions, we obtain the following general results for \MaxSP{} and \MinSP{}. We give the proofs for both theorems in~\autoref{sec:consequences}.

\begin{theorem}[Lower-Order Improvement for Exact \MaxSP{} and \MinSP{}] \label{thm:lower-order-exact}
We can exactly optimize any $\MaxSP_{k,\ell}$ and $\MinSP_{k,\ell}$ formula in randomized time $m^{k+\ell - 1}/\log^{\Omega(1)} m$.
\end{theorem}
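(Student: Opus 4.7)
The plan is to combine the completeness theorem (\autoref{thm:completeness-ip}) with the fastest known algorithms for Maximum/Minimum Inner Product, which achieve a polylog speed-up over the quadratic baseline. Concretely, the \OV{} algorithms of Abboud--Williams--Yu and Chan--Williams, combined with the Chen--Williams equivalence between moderate-dimensional \MinIP{} and \MaxIP{}, yield a randomized $n^{2}/\log^{\Omega(1)} n$-time algorithm for \MaxIP{} and \MinIP{} in moderate dimension $d=n^{\gamma}$. Via \autoref{thm:equivalence-ip}, this transfers to an $m^{2}/\log^{\Omega(1)} m$-time algorithm for \SparseMaxIP{} and \SparseMinIP{}.

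First, I would invoke the universe reduction underlying \autoref{thm:completeness-ip} to reduce a $\MaxSP_{k,\ell}$ (resp.\ $\MinSP_{k,\ell}$) formula on a size-$m$ relational structure to a \emph{single} instance of \SparseMaxIP{} (resp.\ \SparseMinIP{}) of size $m' = \tilde{O}(m^{(k+\ell-1)/2})$, with pre- and post-processing running in time $\tilde{O}(m^{k+\ell-1})$. Feeding the reduced instance to the polylog-improved algorithm above gives total running time $(m')^{2}/\log^{\Omega(1)} m' = m^{k+\ell-1}/\log^{\Omega(1)} m$, which is the bound claimed in the theorem. Randomization is inherited from the Chen--Williams equivalence in the minimization direction.

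The main obstacle is that the statement of \autoref{thm:completeness-ip} is formulated for \emph{polynomial} speed-ups only; a generic fine-grained reduction usually introduces polylog overhead that wipes out a polylog improvement on the target problem. To preserve the $\log^{\Omega(1)}$ savings, one has to re-examine the reduction and verify three stronger properties: only $O(1)$ oracle calls are made; the oracle instance has size $\Theta(m^{(k+\ell-1)/2})$ up to (at most) small polylog factors rather than within an arbitrary polynomial; and the pre-/post-processing cost is bounded by $m^{k+\ell-1}/\log^{\omega(1)} m$, so it is absorbed into the polylog gain. These properties should be built into the explicit construction used for \autoref{thm:completeness-ip} and constitute the technical crux of the proof.
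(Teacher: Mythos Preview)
Your high-level instinct---combine the completeness reduction with the fastest known inner-product algorithm and track sub-polynomial overhead---is correct, but the concrete plan diverges from the paper in two essential respects, and as written it does not go through.

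First, the source algorithm. The OV algorithms of \cite{AbboudWY15,ChanW16} together with the Chen--Williams equivalences do \emph{not} yield an exact $n^2/\log^{\Omega(1)}n$ algorithm for \MaxIP{} or \MinIP{} in moderate dimension $d=n^\gamma$: those equivalences are stated for polynomial savings (or for constant-factor approximation), and no exact polylog-improved algorithm is known in moderate dimension. The paper instead uses the Alman--Chan--Williams algorithm~\cite{AlmanCW16}, which solves \kMaxIP{k} and \kMinIP{k} \emph{exactly} in randomized time $n^k/\log^{100}n$ provided the dimension is only $O(\log^{2.9}n)$. This polylogarithmic-dimension regime is what the paper targets.

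Second, the reduction. The paper does not reduce to a single \SparseMaxIP{} instance of size $\widetilde\Order(m^{(k+\ell-1)/2})$, and it is unclear how one would construct such a reduction (for $k\ge 3$ the count $\sum_y[\phi(x_1,\dots,x_k,y)]$ does not factor bilinearly). Instead, the paper proves a quantitatively refined reduction, Lemma~\ref{lem:reduction}, parameterized by an arbitrary savings function $s(n)$: if \kMaxIP{k} in dimension $d=\widetilde\Order(s(n)^4\log^2 n)$ is solvable in time $O(n^k/s(n))$, then every $\MaxSP_k$ formula is solvable in time $O(m^k/s(\sqrt[k+1]{m}))$. Setting $s(n)=\log^{0.1}n$ makes $d=\widetilde\Order(\log^{2.4}n)\le O(\log^{2.9}n)$, so~\cite{AlmanCW16} applies and yields $m^k/\log^{\Omega(1)}m$ directly. (The case $\ell\ge 2$ is handled separately by the $O(m^{k+\ell-3/2})$ algorithm in Appendix~\ref{sec:improved-alg-multiple-counting}.) Thus the ``re-examination'' you anticipate is already built into Lemma~\ref{lem:reduction}, but its shape---many calls to \kMaxIP{k} in polylog dimension, enabled by the universe reduction of Section~\ref{sec:universe-reduction}---is quite different from the single large-instance picture you sketch; in particular, the reduction does \emph{not} make only $O(1)$ oracle calls.
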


Interestingly, for constant-factor approximations, a complete shave of logarithmic factors is possible.

\begin{theorem}[Lower-Order Improvement for Approximate \MaxSP{} and \MinSP{}] \label{thm:lower-order-approx}
For every constant $c>1$, we can $c$-approximate every $\MaxSP_{k,\ell}$ and $\MinSP_{k,\ell}$ formula in time \raisebox{0pt}[0pt][0pt]{$m^{k+\ell - 1}/2^{\Omega(\sqrt{\log m})}$}. For $\MaxSP_{k,\ell}$ the algorithm is deterministic; for $\MinSP_{k,\ell}$ it uses randomization.
\end{theorem}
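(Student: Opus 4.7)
}
The plan is to compose the approximation-preserving completeness theorem (\autoref{thm:completeness-ip-approx}) with the sparse-to-moderate equivalence (\autoref{thm:equivalence-ip}), and then invoke the fastest known approximation algorithms for moderate-dimensional \MaxIP{}/\MinIP{} at the base. Fix a $\MaxSP_{k,\ell}$ (resp.\ $\MinSP_{k,\ell}$) formula $\psi$ on an $m$-sized structure, a target factor $c > 1$, and pick a small $\varepsilon > 0$ with $c - 2\varepsilon > 1$. First, \autoref{thm:completeness-ip-approx} reduces $c$-approximating $\psi$ to $(c-\varepsilon)$-approximating \SparseMaxIP{} (resp.\ \SparseMinIP{}) on an instance of size $m' = m^{O(1)}$. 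A further application of \autoref{thm:equivalence-ip} reduces this to $(c-2\varepsilon)$-approximating moderate-dimensional \MaxIP{}/\MinIP{} on $n = \poly(m)$ vectors of dimension $n^\gamma$ for some $\gamma > 0$.

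At the base, I would plug in the known fast moderate-dimensional approximations. For \MaxIP{}, a \emph{deterministic} constant-factor approximation in time $n^2 / 2^{\Omega(\sqrt{\log n})}$ follows by thresholding and combining $O(\log n)$ calls to the deterministic Chan--Williams \OV{} algorithm. For \MinIP{}, an analogous \emph{randomized} approximation in the same time is obtained by random-coordinate sampling to reduce $c$-approximation of \MinIP{} to $\log^{O(1)} n$ many \OV{} instances (again solved by Chan--Williams). Since each reduction in the chain blows up the instance size only polynomially (with an exponent depending only on $k$, $\ell$, $\gamma$), a $2^{\Omega(\sqrt{\log n})}$ shaving on an $n = m^{O(1)}$ base instance translates to a $2^{\Omega(\sqrt{\log m})}$ shaving on the original, yielding the claimed $m^{k+\ell-1}/2^{\Omega(\sqrt{\log m})}$ bound. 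The randomization is inherited exclusively from the base \MinIP{} approximation, which explains the deterministic-versus-randomized asymmetry in the statement.

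The main obstacle is that \autoref{thm:completeness-ip-approx} and \autoref{thm:equivalence-ip} are phrased in the polynomial-savings regime ($O(m^{2-\delta})$), so one must revisit the underlying reductions and verify that they are quantitative enough to preserve \emph{subpolynomial} shavings. Concretely, I would check that (i) the instance-size blow-up is a fixed polynomial $m \mapsto m^{O(1)}$ with exponent depending only on $k, \ell, \gamma$, (ii) the additive preprocessing overhead of the reduction is itself $O(m^{k+\ell-1}/2^{\Omega(\sqrt{\log m})})$ so it does not swamp the savings, and (iii) the $\varepsilon$-loss in the approximation factor can be chosen arbitrarily small without affecting the polynomial blow-up. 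Apart from this bookkeeping, the theorem is essentially a composition of our completeness machinery with off-the-shelf approximation algorithms for moderate-dimensional inner product.
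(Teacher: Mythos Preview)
Your plan is essentially the paper's approach, and your diagnosis of the obstacle is exactly right. The paper does not route through \autoref{thm:completeness-ip-approx} and \autoref{thm:equivalence-ip} as black boxes (which, as you note, are stated only for polynomial savings); instead it packages the needed quantitative reduction directly as \autoref{lem:reduction}, which takes a savings function $s(n)$ and a \kMaxIP{k}/\kMinIP{k} algorithm in dimension $\widetilde\Order(s(n)^4\log^2 n)$ and returns an $\OptSP_k$ algorithm with savings $s(\sqrt[k+1]{m})$. The proof of \autoref{thm:lower-order-approx} is then a one-liner: set $s(n)=2^{\Theta(\sqrt{\log n})}$ and invoke the off-the-shelf approximations from \cite{Chen18} (deterministic, for \MaxIP{}) and \cite{ChenW19} (randomized, for \MinIP{}) at that dimension. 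Your items (i)--(iii) are precisely what \autoref{lem:reduction} establishes, so the ``bookkeeping'' you anticipate is already done there; you would just be rediscovering it.

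One small correction at the base: the deterministic \MaxIP{} approximation you want does not come from thresholding plus Chan--Williams \OV{} (that route gives a decision-style gap test, not directly a multiplicative approximation, and deterministically bridging the gap is nontrivial). The paper simply cites \cite[Theorem~1.5]{Chen18}, whose proof uses polynomial-method machinery specific to \MaxIP{}. Similarly, the randomized \MinIP{} approximation is cited from \cite[Theorem~1.7]{ChenW19}. So rather than sketching these constructions, just invoke them; the deterministic/randomized asymmetry in the statement then traces back to these two citations, exactly as you say.
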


\section{Preliminaries} \label{sec:preliminaries}
For an integer $k \geq 1$, we set $[k] = \{1, \ldots, k\}$. Moreover, we write $\widetilde O(T) = T \log^{\Order(1)} T$.

\paragraph*{First-Order Model-Checking}
A \emph{relational structure $(X, R_1, \dots, R_r)$} consists of $n$ objects $X$ and relations $R_j \subseteq X^{a_j}$ (of arbitrary arities~$a_j$) between these objects. A \emph{first-order formula} is a quantified formula of the form
\begin{equation*}
  \psi = (Q_1 x_1)\, \ldots\, (Q_k x_k)\, \phi(x_1, \ldots, x_k),
\end{equation*}
where $Q_i \in \{\exists, \forall\}$ and $\phi$ is a Boolean formula over the predicates $R_j(x_{i_1}, \dots, x_{i_{a_j}})$. Given a relational structure, the \emph{model-checking problem} (or \emph{query evaluation problem}) is to check whether~$\psi$ holds on the given structure, that is, for $x_1, \dots, x_k$ ranging over $X$ and by instantiating the predicates $R_j(x_{i_1}, \dots, x_{i_{a_j}})$ in $\phi$ according to the structure, $\psi$ is valid.

Following previous work in this line of research~\cite{GaoIKW18,BringmannFK19}, we assume that the input is represented sparsely -- that is, we assume that the relational structure is written down as an exhaustive enumeration of all records in all relations; let $m$ denote the total number of such entries. This convention is reasonable as this data format is common in the context of database theory and also for the representation of graphs (where it is called the \emph{adjacency list} representation). By ignoring objects not occurring in any relation, we may always assume that $n \leq \Order(m)$.

It is often convenient to assume that each variable $x_i$ ranges over a separate set $X_i$. We can make this assumption without loss generality, by introducing some additional unary predicates.

\paragraph*{\texorpdfstring{\boldmath$\MaxSP_{k,\ell}$ and $\MinSP_{k,\ell}$}{MaxSPk and MinSPk}}
In analogy to first-order properties with quantifier structure $\exists^k\forall^{\ell}$ (with maximization instead of $\exists$ and counting instead of $\forall$), we now define a class of optimization problems: Let $\MaxSP_{k,\ell}$ be the class containing all formulas of the form 
\begin{equation} \label{eq:psi}
  \psi = \max_{x_1, \dots, x_k} \counting_{y_1, \dots, y_{\ell}}\, \phi(x_1, \ldots, x_k, y_1, \dots, y_{\ell}),
\end{equation}
where, as before, $\phi$ is a Boolean formula over some predicates of arbitrary arities. We similarly define $\MinSP_{k,\ell}$ with ``$\min$'' in place of ``$\max$''. Occasionally, we write $\OptSP_{k,\ell}$ to refer to both of these classes simultaneously, and we write ``$\opt$'' as a placeholder for either ``$\max$'' or ``$\min$''. In analogy to the model-checking problem for first-order properties, we associate to each formula $\psi \in \OptSP_{k,\ell}$ an algorithmic problem:

\begin{definition}[$\Max(\psi)$ and $\Min(\psi)$] \label{def:optimization}
Let $\psi \in \MaxSP_{k,\ell}$ be as in~\eqref{eq:psi}. Given a relational structure on objects $X$, the $\Max(\psi)$ problem is to compute
\begin{equation*}
  \OPT = \max_{x_1, \dots, x_k \in X} \,\counting_{y_1, \dots, y_{\ell} \in X}\, \phi(x_1, \dots, x_k, y_1, \dots, y_{\ell}).
\end{equation*}
We similarly define $\Min(\psi)$ for $\psi \in \MinSP_{k,\ell}$. Occasionally, for $\psi \in \OptSP_{k,\ell}$, we write $\Opt(\psi)$ to refer to both problems simultaneously.
\end{definition}

As before, we usually assume (without loss of generality) that each variable ranges over a separate set: $x_i \in X_i$, $y_i \in Y_i$. In particular, as claimed before we can express the \SparseMaxIP{} formula
\begin{equation*}
  \psi = \max_{x_1 \in X_1, x_2 \in X_2} \#\{y \in [d] : E(x_1, y) \land E(x_2, y) \}
\end{equation*}
in a way which is consistent with \autoref{def:optimization} by introducing three unary predicates for $X_1$, $X_2$ and $[d]$. For convenience, we introduce some further notation: For objects $x_1 \in X_1, \dots, x_k \in X_k$, we denote by $\Val(x_1, \dots, x_k) = \counting_{y_1, \dots, y_{\ell}} \phi(x_1, \dots, x_k, y_1, \dots, y_{\ell})$ the \emph{value} of $(x_1, \dots, x_k)$.

\autoref{def:optimization} introduces $\Max(\psi)$ and $\Min(\psi)$ as \emph{exact} optimization problems (i.e., $\OPT$ is required to be computed exactly). We say that an algorithm computes a \emph{$c$\=/approximation} for $\Max(\psi)$ if it computes any value in the interval $[c^{-1} \mult \OPT, \OPT]$. Similarly, a $c$\=/approximation for $\Min(\psi)$ computes any value in $[\OPT, c \mult \OPT]$.

The problem $\Opt(\psi)$ can be solved in time $\Order(m^{k+\ell-1})$ for all $\OptSP_{k,\ell}$ formulas $\psi$, by a straightforward extension of the model-checking baseline algorithm; see~\autoref{sec:baseline} for details. As this is clearly optimal for $k + \ell = 2$, we will often implicitly assume that $k + \ell \geq 3$ in the following.

As we show in~\autoref{sec:improved-alg-multiple-counting}, we can \emph{exactly} solve $\OptSP_{k,\ell}$ in time $O(m^{k + \ell - 3/2})$ when $\ell \geq 2$. Thus, in the remaining sections we will be working with the \emph{hardest} case $\ell = 1$. For convenience we write $\MaxSP_k := \MaxSP_{k,1}$,  and similarly for $\MinSP_k$ and $\OptSP_k$. Since for a fixed formula $\psi\in \OptSP$, $k$ and $\ell$ are constants, $f(k,\ell)$-factors are hidden in the $O$-notation throughout the paper.
\section{Technical Overview}\label{sec:technical-overview}

In this section we give an overview of the main technical ideas used to give our completeness result (\autoref{thm:completeness-ip}). Let $\psi$ be a $\MaxSP{}_{k, \ell}$ formula. We will outline the reduction from $\Max(\psi)$ to \SparseMaxIP{}. Since for $\ell \geq 2$ we can solve $\Max(\psi)$ in time $O(m^{k+\ell-3/2})$ (see~\autoref{sec:improved-alg-multiple-counting}) we focus on the case of $\ell = 1$. The reduction consists of two phases. In the first phase (\autoref{sec:reduction-to-hybrid}), we reduce $\psi$ to an intermediate problem called the \emph{Hybrid Problem} which captures the core hardness, but is more restricted. 
For now, the reader can think of the Hybrid Problem as a vector-definable problem (as introduced in the introduction) $\max_{x_1\in X_1,\dots,x_k\in X_k} \sum_{i=1}^{d} f(x_1[i],...,x_k[i])$ with $X_1,\dots,X_k \subseteq \{0,1\}^d$; we define it formally in~\autoref{sec:hybrid-problem}. Since a Hybrid Problem is more restricted than the general problem $\Max(\psi)$, the first phase consists of the following 4 steps in which we progressively restrict the shape of $\psi$:

\begin{enumerate}
\item Remove all \emph{hyperedges}, that is, $\psi$ no longer contains predicates of arity $\geq 3$ so an instance of $\Max(\psi)$ can be thought of as a graph with parallel (or alternatively, colored) edges. 
\item Remove all edges between vertices $x_i$ and $x_j$ that we maximize over. We will call these \emph{cross edges}. After this step the only remaining edges are between vertices $x_i$ and the counting variable $y$.
\item Remove all \emph{parallel edges} (or alternatively, colored edges), that is, we combine parallel edges into simple edges.
\item Remove unary predicates, finally turning the $\Max(\psi)$ instances into graphs. At this point it becomes simple to rewrite $\Max(\psi)$ as a Hybrid Problem.
\end{enumerate}

The second phase of the reduction is to reduce the Hybrid Problem to a \SparseMaxIP{} instance~(\autoref{sec:universe-reduction}). The general idea of this step seems straightforward: For simplicity again let us focus on a vector-definable problem $\max_{x_1\in X_1,\dots,x_k\in X_k} \sum_{i=1}^{d} f(x_1[i],...,x_k[i])$ with $X_1,\dots,X_k \subseteq \{0,1\}^d$. We can precisely ``cover'' each $f(x_1[i],\dots,x_k[i])$ by at most $2^k$ summands expressing
\begin{equation*}
  \sum_{\substack{\alpha_1,\dots,\alpha_k \in \{0,1\} \\ f(\alpha_1,\dots,\alpha_k)=1}} [(x_1[i],\dots,x_k[i])=(\alpha_1,\dots,\alpha_k)],  
\end{equation*}
where the outer $[ \cdot ]$ denotes the Iverson brackets. Observe that each such summand is equivalent to the \MaxIP{} function, \emph{up to complementing some $x_j[i]$'s} (i.e.\ each summand can be expressed as \MaxIP{} by setting $x_j[i] := 1 - x_j[i]$ whenever $\alpha_j = 0$). The issue, however, is that complementing $x_j[i]$'s means complementing a binary relation of size $O(m)$ (between~$n$ vectors and $d$ coordinates). Since complementing a sparse relation generally produces a dense relation (here: of size $\Omega(nd)$), this will produce a prohibitively large problem size for the \SparseMaxIP{} formulation if $d$ is large. 

The natural approach to overcome this issue is to reduce the dimension of the Hybrid Problem, so that we can afford the complementation step. One challenge in this is that \MaxSP{} formula might have its optimal objective value anywhere in $\{0,\dots,m^\ell\}$, but reducing the dimension from $d\le m^\ell$ to, say, $d=m^\gamma$ also reduces the range of possible objective values to $\{0,\dots,m^\gamma\}$. It appears counter-intuitive that such a ``compression'' of objective values should be possible while allowing us to reconstruct the optimum value \emph{exactly}. Perhaps surprisingly, 
we are able to achieve this by a simple deterministic dimension reduction.

The idea of our dimension reduction is as follows. For concreteness, focus on the \SparseMaxIP{} problem. Starting from a \SparseMaxIP{} instance $X_1, X_2 \subseteq \{0, 1\}^d$, we construct a hash function $h: \{0,1\}^d \mapsto \{0,1\}^{d'}$ with $d' \ll d$, which maps every one-entry to $t$ coordinates in $[d']$. More precisely, for every coordinate $i \in [d]$, we deterministically choose an auxiliary vector $w_i \in \{0, 1\}^{d'}$ with exactly $t$ one-entries for some parameter~$t$. Then, the hash function is defined as $h(x) = \bigvee_{i : x[i] = 1} w_i$ (here the OR is applied coordinate-wise).

We say that there is a collision between two vectors $x_1, x_2$ if there are distinct $i, j \in [d]$ such that $x_1[i] = x_2[j] = 1$ and the auxiliary vectors $w_i$ and $w_j$ share a common one-entry. Ideally, every pair of vectors $x_1 \in X_1,x_2 \in X_2$ is hashed \emph{perfectly}, meaning that no collision takes place. In that case, it holds that $\innerprod{h(x_1)}{h(x_2)} = t \cdot \innerprod{x_1}{x_2}$ and thus also $\OPT' = t \mult \OPT$, where $\OPT$ and $\OPT'$ are the objective values of the original and the hashed instance, respectively. However, in reality we cannot expect the hashing to be perfect. Note that nevertheless the difference $|\innerprod{h(x_1)}{h(x_2)} - t \cdot \innerprod{x_1}{x_2}|$ is at most the number of collisions between~$x_1$ and~$x_2$.

We will construct $h$ in such a way that for all pairs $x_1, x_2$, the number of collisions is small, say at most $C$. Then by setting $t > 2C$, we ensure that $|t \mult \OPT - \OPT'| < t / 2$ so we can recover $t \mult \OPT$ by computing $\OPT'$ and rounding to the closest multiple of $t$. In particular, the optimal pair of vectors in the hashed instance correspond to the pair with maximum inner product in the original instance. Note that we crucially use the fact that \MaxIP{}
is expressive enough to \emph{compute the value} of the inner product, which allows us to get rid of the small additive error introduced by the hashing (after rounding).

In~\autoref{sec:universe-reduction} we show that the desired hash function exists and is in fact
deterministic: Pick any $t$ primes $p_1,\dots,p_t$ of size $\Theta(t\log t)$ and let $d' = p_1 + \dots + p_t$. We identify $[d']$ with $\{ (i, p_j) : 1 \leq j \leq t, 0 \leq i < p_j \}$ and assign the auxiliary vector $w_i$ to have one-entries exactly at all coordinates $(i \bmod p_j, p_j)$, $1 \leq j \leq t$. A simple calculation shows that with this construction the number of collisions between $x_1$ and $x_2$ is at most $\norm{x_1}_1 \mult \norm{x_2}_1 \mult \log d$, see~\autoref{lem:bloom-filter-deterministic}. With some additional tricks, we can control this quantity.

Our analysis allows us to even maintain $c$-approximate solutions, albeit with an arbitrarily small blow-up due to the small error introduced by rounding. Finding a fully approximation-preserving reduction remains a challenge for future work. Additionally, we need to take great care that our reductions are efficient enough to even transfer $\log^{0.1} n$-improvements, to obtain our speed-up for exact optimization (\autoref{thm:lower-order-exact}).

\paragraph*{Comparison to Gao et al.'s Work}
Our reduction is similar to the work of Gao, Impagliazzo, Kolokolova and Williams~\cite{GaoIKW18}, showing that the sparse version of Orthogonal Vectors is complete for model checking first-order properties. Here we discuss the key differences.

This first phase of our reduction follows the same structure as in Gao et al., but we simplify the proof significantly: One major difference is that they define a more complicated version of the Hybrid Problem including cross predicates \cite[Section 5.2]{GaoIKW18}. Borrowing ideas from~\cite{BringmannFK19}, we remove the cross predicates at an earlier stage of the reduction (Step 2), which simplifies the remaining Steps~3 and~4. The absence of cross predicates also simplifies the baseline algorithm (\autoref{sec:baseline}). More generally, by splitting the reduction into a chain of four steps we cleanly separate the main technical ideas used in the first phase; see \autoref{sec:reduction-to-hybrid} for more details. In the same spirit we simplify Gao et al.'s improved algorithm \cite[Section 9.2]{GaoIKW18} for all problems with more than 1 counting quantifier avoiding their case distinction of 9 different cases by using a simple basis to represent all Boolean functions $\phi: \{0,1\}^3 \to \{0,1\}$; see~\autoref{sec:improved-alg-multiple-counting}.

In the second phase of the reduction, their work faces the same main challenge as ours. Specifically, reducing their Hybrid Problem to OV naively requires complementing a sparse binary relation, possibly resulting in a large dense complement. They solve this issue by designing a similar dimension reduction as ours using a Bloom filter. Naturally their dimension reduction is randomized, but they also provide a derandomization. However, note that there is a crucial difference: They reduce to OV which is a \emph{decision} problem, while we reduce to the \emph{optimization} problem \MaxIP{}. For this reason, the dimension reductions differ in nature: One the one hand, we exploit that \MaxIP{} is more expressive than OV -- namely that \MaxIP{} can handle a small number of errors if we round the result, while for OV any introduced error would result in vectors that are not orthogonal anymore. On the other hand, by reducing to OV, Gao et al.\ do not have to worry about ``compressing'' the range of possible optimal values, or making the reduction approximation-preserving. For these reasons, their dimension reduction would be unsuitable in our work, and ours would be unsuitable in their work.
\section{The Reduction}
In this section we give the proofs of our main results. The following lemma captures our reduction in all generality. Let \kMaxIP{k} denote the generalization of the \MaxIP{} problem with the objective to compute $\max_{x_1 \in X_1, \dots, x_k \in X_k} \langle x_1, \dots, x_k \rangle$, where $\langle x_1, \dots, x_k \rangle = \sum_y x_1[y] \mult \dots \mult x_k[y]$. We define \kMinIP{k} analogously.

\begin{lemma} \label{lem:reduction}
Let $s(n) \leq n^{1/6}$ be a nondecreasing function and let $c \geq 1$ be constant. Assume that $\kMaxIP{k}$ in dimension $d = \widetilde\Order(s(n)^4 \log^2 n)$ can be $c$-approximated in time $\Order(n^k / s(n))$, and let $\psi$ be an arbitrary $\MaxSP_k$ formula.
\begin{itemize}
\item If $c = 1$ (i.e., we are in the case of exact computation), then $\Max(\psi)$ can be exactly solved in time $\Order(m^k / s(\!\sqrt[k+1]m))$.
\item If $c > 1$, then $\Max(\psi)$ can be $(c + \varepsilon)$-approximated in time $\Order(m^{k} /s(\!\sqrt[k+1]m))$, for any constant $\varepsilon > 0$. 
\end{itemize}
The analogous statement holds for $\kMinIP{k}$ and $\MinSP_k$.
\end{lemma}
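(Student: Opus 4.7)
The overall strategy follows the two-phase plan outlined in the technical overview. Phase one will transform an arbitrary $\MaxSP_k$ formula $\psi$ into a ``Hybrid Problem'' instance that is essentially vector-definable, and phase two will reduce this instance to $\kMaxIP{k}$ via a carefully tuned universe reduction that compresses the counting dimension enough to afford complementation. Throughout, I will aim for two invariants: the reduction cost must be at most $\widetilde\Order(m)$-per-oracle-call (to actually transfer $s(\cdot)$-savings without eroding them), and the counting dimension of the produced oracle instance must stay at $\widetilde\Order(s^4 \log^2 n)$ as demanded by the hypothesis.

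For phase one, I would apply in sequence the four simplification steps listed in \autoref{sec:technical-overview} and carried out formally in \autoref{sec:reduction-to-hybrid}: removing arity-$\geq 3$ predicates, then cross predicates between maximized variables, then parallel/colored edges, then unary predicates, each time paying only a polylogarithmic overhead and reducing to a constant number of subinstances. The output is an instance of the Hybrid Problem which, after unfolding the quantifier-free $\phi$ for the unique counting variable $y$, can be written as
\begin{equation*}
  \max_{x_1 \in X_1,\dots,x_k \in X_k}\ \sum_{y \in Y} f\bigl(x_1[y],\dots,x_k[y]\bigr)
\end{equation*}
for a fixed Boolean $f : \{0,1\}^k \to \{0,1\}$, where each $x_i$ is a 0/1 vector indexed by $Y$ given sparsely. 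The crucial point is that the global sparsity (total number of one-entries across all vectors in one $X_i$) is bounded by $m$.

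For phase two, I would expand $f$ into its at most $2^k$ min-terms $\prod_j [x_j[y] = \alpha_j]$ and handle each min-term as a $\kMaxIP{k}$ instance, where min-terms with $\alpha_j = 0$ require complementing the $j$-th vectors. To keep complements sparse, first apply the deterministic prime-based hash $h : \{0,1\}^d \to \{0,1\}^{d'}$ from \autoref{lem:bloom-filter-deterministic} (sketched at the end of \autoref{sec:technical-overview}), choosing $t$ slightly larger than twice the per-pair collision bound $\|x_1\|_1 \cdots \|x_k\|_1 \log d$, so that $\inner{h(x_1),\dots,h(x_k)}$ equals $t \cdot \inner{x_1,\dots,x_k}$ up to an additive error strictly less than $t/2$; rounding the oracle's answer to the nearest multiple of $t$ then recovers $\OPT$ \emph{exactly} when $c = 1$, and yields a $(c+\varepsilon)$-approximation when $c > 1$ by choosing $t$ a factor $\Theta(1/\varepsilon)$ larger. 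To control sparsity I would bucket each $X_i$ by the sparsity of its vectors and, within each bucket, further partition into chunks of size $B = \Theta(\sqrt[k+1]{m})$; inside a chunk one can take $s$ proportional to the bucket's sparsity level, obtaining dimension $d' = \widetilde\Order(s^4 \log^2 B)$ as required. Summing the oracle's $\Order(B^k / s(B))$ running time over the $(m/B)^k$ chunks gives the claimed $\Order(m^k / s(\sqrt[k+1]{m}))$ bound.

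The main obstacle I expect is orchestrating the parameters cleanly: the sparsity-bucketing must play well with the prime-hash parameter $t$ (which depends on the bucket), the approximation blow-up from rounding must amortize to $(c+\varepsilon)$ rather than compounding over the $\Omega(\log m)$ buckets and the $2^k$ min-terms, and the preprocessing cost of building the hash tables and preparing the complemented sparse representations must itself be $\widetilde\Order(m)$ per oracle call so as not to eat into the $s(\cdot)$ speed-up targeted by \autoref{thm:lower-order-exact}. The minimization analogue follows symmetrically by replacing $\max$ with $\min$ and $\kMaxIP{k}$ with $\kMinIP{k}$, with identical sparsification and rounding arguments.
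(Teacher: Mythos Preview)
Your two-phase outline matches the paper's, but there are two concrete gaps.

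First, ``expand $f$ into min-terms and handle each min-term as a $\kMaxIP{k}$ instance'' does not work if it means separate oracle calls: the objective is $\max_{x_1,\dots,x_k}\sum_\alpha V_\alpha(x_1,\dots,x_k)$, and neither $\sum_\alpha \max_x V_\alpha$ nor $\max_\alpha \max_x V_\alpha$ recovers it. The paper's Observation~\ref{obs:hybrid-to-basic} instead builds a \emph{single} $\kMaxIP{k}$ instance: after the universe reduction the universe is small, and for each element $u$ of type $\tau$ one complements the $u$-th coordinate of $x_i$ exactly when $\tau[i]=0$, so that the single inner product equals $\sum_\tau \Val_\tau$. (Relatedly, your Hybrid instance is not described by one fixed Boolean $f$: unary predicates on $Y$ give each $y$ its own $\phi_y$, which is precisely why the paper partitions the universe by type $\tau$.)

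Second, you misplace the source of the $s(m)\to s(\sqrt[k+1]{m})$ degradation. In the paper it does \emph{not} occur in Phase~2: Lemma~\ref{lem:hybrid-to-ip} achieves the full saving $s(m)$ by brute-forcing the at most $m/s(m)$ ``heavy'' sets with $|S_i|>s(m)$ and then hashing with $t=\Theta(s(m)^2\log m)$, so that the collision bound $O(|S|^2\log|U|)=O(s(m)^2\log m)$ is below $t/2$. The $\sqrt[k+1]{m}$ loss lives entirely in Phase~1, in the cross-edge removal (Lemma~\ref{lem:cross-edges}). That step is a Turing reduction: it partitions each $X_i$ into groups of total degree $\sqrt[k+1]{m}$, calls the cross-edge-free oracle on all $(m/\sqrt[k+1]{m})^k$ group combinations, and then re-checks the top $\Theta(mn^{k-2})$ combinations with the baseline to discard false positives (tuples where a forbidden cross edge is actually present). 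Your claim that Phase~1 incurs ``only polylogarithmic overhead and a constant number of subinstances'' is therefore incorrect, and the chunking/bucketing you propose for Phase~2 is both unnecessary there and does not address the false-positive issue that actually forces the $\sqrt[k+1]{m}$ loss.
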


The outline for this section is as follows. First we show how to derive the completeness result (Theorems~\ref{thm:completeness-ip} and~\ref{thm:completeness-ip-approx}) and the lower-order improvements (Theorems~\ref{thm:lower-order-exact} and~\ref{thm:lower-order-approx}) from \autoref{lem:reduction} in~\autoref{sec:consequences}. Then we present the proof of \autoref{lem:reduction}, which is carried out in two phases as explained in the technical overview. In \autoref{sec:hybrid-problem} we formally introduce the intermediate problem called the \emph{Hybrid Problem}. In \autoref{sec:universe-reduction} we give a fine-grained reduction from the Hybrid Problem to Maximum or Minimum Inner Product (\autoref{lem:hybrid-to-ip}). Finally, in \autoref{sec:reduction-to-hybrid} we reduce any $\Opt_{k,\ell}$ formula to the Hybrid Problem (\autoref{lem:optsp-to-hybrid}), thus finishing the proof of \autoref{lem:reduction}. We will pay particularly close attention to the exact savings $s$ in every step.

\subsection{Consequences}\label{sec:consequences}
First we derive the completeness Theorems~\ref{thm:completeness-ip} and~\ref{thm:completeness-ip-approx} from \autoref{lem:reduction}.

\begin{proof}[Proof of Theorems~\ref{thm:completeness-ip} and~\ref{thm:completeness-ip-approx}]
Let $c \geq 1$ denote the approximation ratio (that is, $c = 1$ for \autoref{thm:completeness-ip} and $c \geq 1$ for \autoref{thm:completeness-ip-approx}). Assuming that \SparseMaxIP{} can be $c$-approximated in time $O(m^{2-\delta})$ for some $\delta > 0$, we obtain an algorithm for $c$-approximating \MaxIP{} in dimension $d = n^{4\delta/9}$ in time $\Order((nd)^{2-\delta}) = \Order(n^{2-\delta}d^2) = \Order(n^{2-\delta/9})$. We also obtain an algorithm for $c$-approximating \kMaxIP{k} in the same dimension in time $\Order(n^{k-\delta/9})$ (brute-force all options for the first $k-2$ vectors, then use the \kMaxIP{2} algorithm). We can now plug this improved algorithm into our reduction: Setting $s(n) = n^{\delta / 9} / \polylog(n)$ we have that \kMaxIP{k} in dimension $d = \widetilde\Order(s(n)^4 \log^2 n)$ can be $c$-approximated in time $\Order(n^k / s(n))$. Thus, if $c = 1$ we obtain by \autoref{lem:reduction} that $\Opt(\psi)$ can be exactly solved in time $\Order(m^k / s(\!\sqrt[k+1]m)) = \Order(m^{k-\beta})$ for $\beta = \frac{\delta}{9(k+1)} > 0$. If $c > 1$, we obtain that $\Opt(\psi)$ can be $(c+\varepsilon)$-approximated in the same running time, for an arbitrarily small constant~$\varepsilon > 0$.
\end{proof}

Next, we prove~\autoref{thm:equivalence-ip}.

\begin{proof}[Proof of \autoref{thm:equivalence-ip}]
The reductions from \SparseMaxIP{} to \MaxIP{} and from \SparseMinIP{} to \MinIP{} for both the exact and approximate settings are a direct consequence of~\autoref{lem:reduction}.

For the other direction, assume there exists some $\delta > 0$ such that \SparseMaxIP{} can be $c$-approximated in time $\Order(m^{2-\delta})$. Set $\gamma := \delta/2$ and observe that any \MaxIP{} instance with $d=n^\gamma$ yields a \SparseMaxIP{} instance of size $m=O(nd) = O(n^{1+\gamma})$. Since we can solve this instance in time $O(m^{2-\delta}) = O(n^{(1+\gamma)(2-\delta)}) = O(n^{(1+\delta/2)(2-\delta)}) = O(n^{2-\delta^2/2})$, we obtain a $O(n^{2-\delta'})$-algorithm for \MaxIP{} with $d = n^\gamma$ and $\delta' = \delta^2/2$. Note that this works for both the exact ($c = 1$)
    and approximate ($c > 1$) settings. The proof for the minimization case is analogous.
\end{proof}

To prove Theorems~\ref{thm:lower-order-exact} and~\ref{thm:lower-order-approx}, we make use of the following state-of-the-art algorithms for \MaxIP{} and \MinIP{}, established in three previous papers~\cite{AlmanCW16,Chen18,ChenW19}.

\begin{theorem}[Improved Algorithms for \MaxIP{} and \MinIP{}~\cite{AlmanCW16,Chen18,ChenW19}]\label{thm:improved-algorithms-ip}
\hspace{0cm}
\begin{itemize}
\item \kMaxIP{k} and \kMinIP{k} in dimension $d = \Order(\log^{2.9} n)$ can be exactly solved in randomized time $\Order(n^k / \log^{100} n)$~\emph{\cite{AlmanCW16}}.
\item For any constant $c > 1$, \kMaxIP{k} in dimension $d = 2^{\Order(\sqrt{\log n})}$ can be $c$-approximated in deterministic time \raisebox{0pt}[0pt][0pt]{$n^k / 2^{\Omega(\sqrt{\log n})}$}~\emph{\cite[Theorem~1.5]{Chen18}}.
\item For any constant $c > 1$, \kMinIP{k} in dimension $d = 2^{\Order(\sqrt{\log n})}$ can be $c$-approximated in randomized time \raisebox{0pt}[0pt][0pt]{$n^k / 2^{\Omega(\sqrt{\log n})}$}~\emph{\cite[Theorem~1.7]{ChenW19}}.
\end{itemize}
\end{theorem}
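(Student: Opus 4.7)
This theorem is a compilation of three previously published results, so the ``proof'' in this paper consists of identifying each citation and invoking the corresponding statement; no new argument is required. Nevertheless, I would spend a short paragraph recalling the underlying techniques so that the reader can see why each bullet is in force in the parameter regime we need.

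For the first bullet (exact \kMaxIP{k}/\kMinIP{k} in dimension $d = O(\log^{2.9} n)$ with a $\log^{100} n$ shaving), the plan is to invoke the polynomial-method algorithm of Alman, Chan, and Williams. The key ingredients are: (i) a low-degree probabilistic polynomial over $\mathbb{F}_2$ that computes a threshold on the inner product with high probability, obtained via Razborov--Smolensky style constructions combined with amplification; (ii) a reduction from exact \MaxIP{}/\MinIP{} to counting, for each integer threshold $\tau \in \{0, \dots, d\}$, the number of $k$-tuples whose inner product exceeds $\tau$ (so the value can be recovered by a single sweep or by binary search); (iii) evaluating this polynomial on all $n^k$ tuples via fast rectangular matrix multiplication, which yields the stated savings precisely when $d$ is small enough ($d = O(\log^{2.9} n)$ suffices for the exponent calculation to leave a $\log^{100} n$ factor to spare).

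For the second and third bullets (constant-factor approximation of \kMaxIP{k}/\kMinIP{k} with $d = 2^{O(\sqrt{\log n})}$ in time $n^k / 2^{\Omega(\sqrt{\log n})}$), the plan is to invoke the distributed PCP framework of Abboud--Rubinstein--Williams as sharpened by Chen (for MaxIP) and Chen--Williams (for MinIP). The scheme is: (i) encode each input vector via an algebraic (Reed--Solomon / AG-code based) ``distributed PCP'' encoding, reducing $c$-approximation of MaxIP (respectively MinIP) to an exact problem on slightly larger vectors but still in dimension $2^{O(\sqrt{\log n})}$; (ii) solve the resulting exact instance by the Abboud--Williams--Yu polynomial-method algorithm, which gives $n^k / 2^{\Omega(\sqrt{\log n})}$ time in this dimension regime. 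The MinIP case requires randomization to handle the encoding step (hence the statement of the third bullet), while the MaxIP case can be made deterministic by using explicit AG codes.

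The main obstacle, if one wanted to re-derive the statements from scratch, would be balancing the polynomial degree, the code parameters, and the rectangular matrix-multiplication exponent so that all three effects combine to yield exactly the shaving factors claimed. Fortunately these calculations are fully worked out in \cite{AlmanCW16,Chen18,ChenW19}, and for our purposes it suffices to quote the three results in the precise parameter regimes stated here, since these are exactly the regimes needed when plugging the algorithms into \autoref{lem:reduction} to derive \autoref{thm:lower-order-exact} and \autoref{thm:lower-order-approx}.
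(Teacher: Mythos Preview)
Your proposal is correct and matches the paper's approach: the paper does not prove this theorem at all but simply states it with citations to \cite{AlmanCW16,Chen18,ChenW19}, exactly as you anticipated. Your additional paragraph sketching the underlying techniques goes beyond what the paper provides, but it is accurate and helpful context.
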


\begin{proof}[Proof of Theorems~\ref{thm:lower-order-exact} and~\ref{thm:lower-order-approx}]
To prove \autoref{thm:lower-order-exact}, we plug in the first algorithm from \autoref{thm:improved-algorithms-ip} into \autoref{lem:reduction} and choose $s(n) = \log^{0.1}n$. We obtain an exact $\OptSP_k$ algorithm in time $m^k / \log^{\Omega(1)} m$.

For \autoref{thm:lower-order-approx}, we plug the second and third algorithms from \autoref{thm:improved-algorithms-ip} into \autoref{lem:reduction} and choose \raisebox{0pt}[0pt][0pt]{$s(n) = 2^{\Order(\sqrt{\log n})}$}. We get a $c$-approximation for $\OptSP_k$ in time \raisebox{0pt}[0pt][0pt]{$m^k / 2^{\Omega(\sqrt{\log m})}$}, for any constant $c > 1$.
\end{proof}

Note that only one of these algorithms is deterministic; other known deterministic algorithms are not efficient enough for our reduction\footnotemark.
\footnotetext{Focus on exact $\MaxSP_k$ for illustration: To obtain the same savings as in \autoref{thm:lower-order-exact}, we would need a deterministic algorithm for \MaxIP{} in dimension $d = \Order(\log^{2.9} n)$ running in time $\Order(n^2 / \log^{100} n)$. However, for this speed-up the current best algorithm~\cite{AlmanCW16} requires $d = \Order(\log^{1.9} n)$, so one needs to either improve the algorithm or improve our dimension reduction (\autoref{lem:reduction}) to dimension $d = \poly(s(n)) \log n$, say.}

\subsection{The Hybrid Problem} \label{sec:hybrid-problem}
We start with another problem definition.

\begin{definition}[Basic Problem]
Given set families $\mathcal S_1, \dots, \mathcal S_k$ over a universe $U$, the \emph{Basic Maximization Problem of type $\tau \in \{0, 1\}^k$} is to to compute
\begin{equation*}
    \OPT = \max_{S_1 \in \mathcal S_1, \dots, S_k \in \mathcal S_k} \Bigg| \Bigg(\bigcap_{i : \tau[i] = 1} S_i\Bigg) \setminus \Bigg(\bigcup_{i : \tau[i] = 0} S_i\Bigg) \Bigg|.
\end{equation*}
\end{definition}

For example, the Basic Problem of type $\tau = 11$ is to maximize the common intersection of two sets $S_1$ and $S_2$, the Basic Problem of type $\tau = 10$ is to maximize the number of elements in $S_1$ not contained in $S_2$ and the Basic Problem of type $\tau = 00$ is to maximize the number of universe elements contained in neither $S_1$ nor $S_2$.

Note that every Basic Problem can be seen as an $\OptSP_k$ formula: We introduce objects for all sets $S_i$ and all universe elements $u$, and connect $S_i$ to $u$ via an edge $E(S_i, u)$ if and only if $u \in S_i$. Consistent with this analogy, we define $n$ as the total number of sets $S_i$ and $m$ as the total cardinality of all sets $S_i$ and, as before, study the Basic Problem with respect to the sparsity $m$.

\begin{definition}[Hybrid Problem]
Given set families $\mathcal S_1, \dots, \mathcal S_k$ over a universe $U$, which is partitioned into $2^k$ parts $U = \bigcup_{\tau \in \{0, 1\}^k} U_\tau$, the \emph{Hybrid Maximization Problem} is to compute
\begin{equation*}
    \OPT = \max_{S_1 \in \mathcal S_1, \dots, S_k \in \mathcal S_k} \sum_{\tau \in \{0, 1\}^k} \Bigg| \,U_\tau \cap \Bigg(\bigcap_{i : \tau[i] = 1} S_i\Bigg) \setminus \Bigg(\bigcup_{i : \tau[i] = 0} S_i\Bigg) \Bigg|.
\end{equation*}
\end{definition}

We similarly define Basic Minimization Problems and define $c$-approximations of Basic Problems in the obvious way. For any $S_1,\dots,S_k$ and $\tau \in \{0, 1\}^k$ we denote by $\Val_{\tau}(S_1,\dots,S_k)$ the \emph{value} of the Basic Problem constraint of type $\tau$:
\begin{equation*}
    \Val_{\tau}(S_1,\dots,S_k) := \Bigg| \,U_\tau \cap \Bigg(\bigcap_{i : \tau[i] = 1} S_i\Bigg) \setminus \Bigg(\bigcup_{i : \tau[i] = 0} S_i\Bigg) \Bigg|.
\end{equation*}
And we use $\Val(S_1, \dots, S_k) := \sum_{\tau} \Val_{\tau}(S_1, \dots, S_k)$ to denote the total value of the sets $S_1, \dots, S_k$ in a Hybrid Problem instance.

Intuitively, the Hybrid Problem simultaneously optimizes Basic Problem constraints of different types. If we could afford to complement (parts of) the sets $S_i$, then there is a straightforward reduction from the Hybrid Problem to a Basic Problem of arbitrary type $\tau$: For each constraint of type $\tau' \neq \tau$, we simply complement all sets $S_i$ with $\tau[i] \neq \tau'[i]$ (more precisely, construct sets $S_i'$ such that $U_{\tau'} \cap S_i' = U_{\tau'} \setminus S_i$) and reinterpret the $\tau'$-constraint as type~$\tau$. In summary:

\begin{observation} \label{obs:hybrid-to-basic}
In time $\Order(n |U|)$, any Hybrid Problem instance can be converted into an equivalent Basic Problem instance of arbitrary type $\tau$. The sparsity of the constructed instance is up to $n |U|$.
\end{observation}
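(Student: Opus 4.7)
My plan is to give an explicit construction that, for a chosen target type $\tau \in \{0,1\}^k$, complements each set $S_i$ precisely on those universe parts $U_{\tau'}$ where $\tau'[i] \neq \tau[i]$, and then to verify that this transformation carries Hybrid contributions of type $\tau'$ over to Basic contributions of type $\tau$ one part at a time.

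Concretely, for each $S_i \in \mathcal{S}_i$ I would define
\begin{equation*}
  S_i' \;:=\; \bigcup_{\tau' \in \{0,1\}^k} A^{(i)}_{\tau'}, \qquad A^{(i)}_{\tau'} := \begin{cases} S_i \cap U_{\tau'} & \text{if } \tau'[i] = \tau[i], \\ U_{\tau'} \setminus S_i & \text{if } \tau'[i] \neq \tau[i], \end{cases}
\end{equation*}
and take $\mathcal{S}_i' := \{ S_i' : S_i \in \mathcal{S}_i\}$ as the $i$-th family of the Basic Problem of type $\tau$. The claim is that $\Val(S_1,\dots,S_k) = |(\bigcap_{i:\tau[i]=1} S_i') \setminus (\bigcup_{i:\tau[i]=0} S_i')|$ for every choice of sets, so in particular the two optimal values agree.

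The main verification step is a part-by-part membership check: for a fixed $\tau'$ and an element $u \in U_{\tau'}$, I would argue that $u$ is counted by the type-$\tau$ Basic constraint on $(S_1',\dots,S_k')$ iff it is counted by the type-$\tau'$ summand of the Hybrid constraint on $(S_1,\dots,S_k)$. The key observation is that by construction $u \in S_i' \iff (\tau'[i] = \tau[i] \wedge u \in S_i) \vee (\tau'[i] \neq \tau[i] \wedge u \notin S_i)$; a short case analysis on the four combinations of $\tau[i], \tau'[i] \in \{0,1\}$ then shows that the requirement ``$u \in S_i'$ when $\tau[i]=1$ and $u \notin S_i'$ when $\tau[i]=0$'' is equivalent to ``$u \in S_i$ when $\tau'[i]=1$ and $u \notin S_i$ when $\tau'[i]=0$''. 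Summing over $\tau'$ and over $u \in U_{\tau'}$ then yields the equality of Hybrid and Basic values on every tuple $(S_1,\dots,S_k)$, establishing equivalence. Crucially, this matching is pointwise, so it preserves both exact optima and $c$-approximations.

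For the complexity bound, each $S_i'$ is built by a single pass over $U$: for every $u \in U$, determine the unique part $U_{\tau'}$ containing $u$ and decide membership in $S_i'$ according to the formula above, which costs $O(|U|)$ per set. Across all $n$ sets this gives $O(n|U|)$ time, and since each $S_i'$ has size at most $|U|$ the total sparsity is $O(n|U|)$ as well. I do not expect a real obstacle here: the only subtle point is making sure the membership analysis is stated cleanly enough that the equivalence is transparent for all four $(\tau[i],\tau'[i])$ combinations simultaneously, rather than being presented through eight separate subcases indexed by $\tau \in \{0,1\}^k$ versus $\tau' \in \{0,1\}^k$.
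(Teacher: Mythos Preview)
Your proposal is correct and follows essentially the same approach as the paper: the paper's justification (given in the paragraph immediately preceding the observation) is precisely to construct $S_i'$ so that $U_{\tau'} \cap S_i' = U_{\tau'} \setminus S_i$ whenever $\tau[i] \neq \tau'[i]$, which is exactly your piecewise definition via the $A^{(i)}_{\tau'}$. You supply more detail than the paper does---the explicit membership case analysis and the per-set $O(|U|)$ running-time argument---but the construction and reasoning are the same.
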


However, being in the sparse setup we cannot tolerate the blow-up in the sparsity. Therefore, in order to efficiently apply \autoref{obs:hybrid-to-basic}, we first have to control the universe size $|U|$. 

\subsection{Universe Reduction}\label{sec:universe-reduction}
The goal of this section is to reduce the Hybrid Problem to \kMaxIP{k}. We give a reduction which closely preserves the savings $s(n)$ achieved by exact or approximate \kMaxIP{k} algorithms (losing only polynomial factors in $s(n)$). As a drawback, the reduction slightly worsens the approximation factor, turning a $c$-approximation into a $(c+\varepsilon)$-approximation.

\begin{lemma} \label{lem:hybrid-to-ip}
Let $s(n) \leq n^{1/6}$ be a nondecreasing function and assume that $\kMaxIP{k}$ in dimension $d = \widetilde\Order(s(n)^4\log^2 n)$ can be $c$-approximated in time $\Order(n^k / s(n))$.
\begin{itemize}
\item If $c = 1$ (i.e., we are in the case of exact computation), then the Hybrid Problem can be exactly solved in time $\Order(m^k / s(m))$.
\item If $c > 1$, then the Hybrid Problem can be $(c + \varepsilon)$-approximated in time $\Order(m^{k} / s(m))$, for any constant $\varepsilon > 0$. 
\end{itemize}
The analogous statement holds for \kMinIP{k} and $\MinSP_k$.
\end{lemma}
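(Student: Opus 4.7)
The plan is to reduce the Hybrid Problem to $\kMaxIP{k}$ (or $\kMinIP{k}$ for the minimization case) via a three-stage pipeline that preserves the savings function $s$. First, I will shrink the universe size from $|U| \leq m$ down to $d' = \widetilde\Order(s(m)^4 \log^2 m)$ using the prime-based Bloom-filter hash sketched in the technical overview: pick $t$ primes $p_1, \dots, p_t$ of size $\Theta(t \log t)$, let $d' = p_1 + \dots + p_t$, assign auxiliary vectors $w_u$ with ones at $(u \bmod p_j, p_j)$ for each $j$, and define $h(x) = \bigvee_{u : x[u]=1} w_u$. Applied to each universe part $U_\tau$ separately, this hash satisfies $|\langle h(x_1), h(x_2)\rangle - t \cdot \langle x_1, x_2\rangle| \leq C$ with $C \leq \|x_1\|_1 \|x_2\|_1 \log d$. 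Second, I invoke \autoref{obs:hybrid-to-basic} to convert the reduced-universe Hybrid Problem into a Basic Problem of the fixed type $\tau = (1,\dots,1)$, yielding a $\kMaxIP{k}$ instance on characteristic vectors of dimension $d'$. Third, I run the IP algorithm and round its output to the nearest multiple of $t$ to recover the Hybrid optimum.

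Controlling the collision count $C$ is the central technical subtlety, as $\|x\|_1$ could a priori be as large as $|U|$. I handle this via a \emph{small/large dichotomy}: partition each family $\mathcal{S}_i$ into \emph{small} sets (size at most $T := \Theta(s(m))$) and \emph{large} sets (size larger than $T$), and compute $\OPT$ as the maximum over the $2^k$ cases indexed by which families contribute a small vs.\ large set. In the all-small case, $\|x\|_1 \leq T$, and choosing $t \approx T^2 \log m$ ensures $C < t/2$; the resulting dimension $d' = \widetilde\Order(t^2) = \widetilde\Order(s(m)^4 \log^2 m)$ fits the IP assumption. For each case in which some $S_j$ is large, there are at most $m/T = m/s(m)$ choices for $S_j$; fixing $S_j$ (by folding it into the universe partition) reduces to a $(k-1)$-Hybrid instance, which I solve inductively. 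Since a $(k-1)$-Hybrid admits a baseline $\Order(m^{k-1})$ algorithm, enumerating the large $S_j$ costs $\Order((m/s(m)) \cdot m^{k-1}) = \Order(m^k/s(m))$ in total.

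To finish the all-small case, I observe that after hashing and the Hybrid-to-Basic conversion, I have a $\kMaxIP{k}$ instance with $n \leq m$ vectors per family in dimension $d'$. To guarantee an IP running time of $\Order(m^k/s(m))$ rather than $\Order(n^k/s(n))$, I pad each family to exactly $m$ vectors by duplicating existing vectors; since duplicates never introduce new maxima or minima, correctness is preserved, and the IP assumption instantiated with parameter $m$ delivers the claimed time. For the exact case ($c = 1$), rounding the IP output to the nearest multiple of $t$ recovers $t \cdot \OPT$ exactly. For the approximate case ($c > 1$), the same procedure yields a $(c+\varepsilon)$-approximation via a careful argument that treats small values of $\OPT$ (where the additive rounding error $C/t$ could dominate) by direct computation, absorbing the rounding slack into the $\varepsilon$ slack.

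I expect the main obstacle to be the tight bookkeeping of $s$-savings throughout the reduction. Since $s$ can be as slow-growing as $\log^{0.1} m$ (cf.\ \autoref{thm:lower-order-exact}), even polylogarithmic overheads from the small/large case distinction, hash construction, or padding risk erasing the savings. Balancing the threshold $T$, the prime count $t$, and the hash dimension $d'$ so that $C < t/2$ holds simultaneously with $d' \leq \widetilde\Order(s(m)^4 \log^2 m)$---while also maintaining the inductive hypothesis for the large-set recursion and ensuring that $(c+\varepsilon)$-approximation is preserved through rounding---will be the key technical difficulty.
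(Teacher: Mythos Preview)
Your approach is essentially the paper's: the small/large dichotomy with threshold $s(m)$, brute-forcing large sets against the baseline, the prime-based hash to shrink the universe, \autoref{obs:hybrid-to-basic} to pass to $\kMaxIP{k}$, and rounding to a multiple of $t$. The parameters $T=\Theta(s(m))$, $t=\Theta(s(m)^2\log m)$, $d'=\widetilde\Order(s(m)^4\log^2 m)$ are exactly right.

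There is one genuine gap. Your claim that the hashed optimum satisfies $\OPT' \approx t\cdot\OPT$ (so that rounding $\OPT'$ to the nearest multiple of $t$ recovers $\OPT$) is false for the type $\tau=0^k$ part of the Hybrid Problem. For $\tau=0^k$ one counts universe elements lying in \emph{no} $S_i$; hashing scales the \emph{complement} $|U_{0^k}\cap(S_1\cup\dots\cup S_k)|$ by $t$, not the count itself. The correct relation is
\[
\Val_{0^k}(S_1',\dots,S_k') \;\approx\; t\cdot \Val_{0^k}(S_1,\dots,S_k)\;-\;\Delta,\qquad \Delta:=t\,|U_{0^k}|-|U_{0^k}'|,
\]
so globally $|t\cdot\OPT-\OPT'-\Delta|<\varepsilon t$, not $|t\cdot\OPT-\OPT'|<\varepsilon t$. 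The fix is easy in the exact case (compute $\Delta$ once and output the integer nearest $(\OPT'+\Delta)/t$), but it is \emph{not} a cosmetic detail in the approximate case: the paper's $(c+\varepsilon)$-analysis uses crucially that $\Delta\ge 0$, which lets one pass from $\ALG'\ge c^{-1}\OPT'$ to $\ALG'+\Delta\ge c^{-1}(\OPT'+\Delta)$ and hence to $\ALG\ge c^{-1}\OPT-\varepsilon$. Your proposed fallback of ``direct computation for small $\OPT$'' does not obviously run in $\Order(m^k/s(m))$ and is unnecessary once $\Delta$ is handled correctly; the paper instead ensures $\Delta\ge 0$ by construction (taking $U_{0^k}'$ to be $t$ disjoint copies of $U_{0^k}$ whenever $|U_{0^k}|\le 4t\log t$) and then rounds via $\lceil\ALG-\varepsilon\rceil$ (maximization) or $\lfloor\ALG+\varepsilon\rfloor$ (minimization).
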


On a high level, we prove \autoref{lem:hybrid-to-ip} by first using a deterministic construction to reduce the universe size, and then reducing further to \kMaxIP{k} as in \autoref{obs:hybrid-to-basic}. The following lemma provides our universe reduction in the form of a hash-like function $h$.

\begin{lemma} \label{lem:bloom-filter-deterministic}
Let $U$ be a universe and let $t$ be a parameter. There exists a universe $U'$ of size at most $4 t^2 \log t$ and a function $h$ mapping elements in $U$ to size-$t$ subsets of $U'$, such that the following properties hold. By abuse of notation, we write $h(S) = \bigcup_{u \in S} h(u)$ for sets $S \subseteq U$.

\begin{enumerate}
\item (Hashing.) For all sets $S \subseteq U$, it holds that $|h(S)| \geq t |S| - |S|^2 \log |U|$.
\item (Efficiency.) Evaluating $h(u)$ takes time $\widetilde\Order(t)$.
\end{enumerate}
\end{lemma}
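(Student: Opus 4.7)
The plan is to realize $h$ via the number-theoretic construction sketched in \autoref{sec:technical-overview}. Identify $U$ with $\{0,1,\dots,|U|-1\}$ and pick $t$ distinct primes $p_1,\dots,p_t$ all lying in $[t, 4t\log t]$; such primes exist for all sufficiently large $t$ by Chebyshev/PNT-type bounds on $\pi$, and small $t$ can be handled trivially. Take $U' := \{(r, p_j) : 1 \le j \le t,\ 0 \le r < p_j\}$, so that $|U'| = \sum_{j=1}^t p_j \le t \cdot 4t\log t = 4t^2\log t$, and define $h(u) := \{(u \bmod p_j,\, p_j) : 1 \le j \le t\}$, which is clearly a size-$t$ subset of $U'$. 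The efficiency claim then reduces to $t$ modular reductions of an integer of magnitude $|U|$ by a prime of magnitude $\widetilde O(t)$, each taking polylogarithmic time; this yields the $\widetilde O(t)$ bound.

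The heart of the proof is the hashing lower bound on $|h(S)|$. I would split $U'$ by prime and write
\[ |h(S)| \;=\; \sum_{j=1}^t \#\{u \bmod p_j : u \in S\} \;=\; \sum_{j=1}^t \bigl(|S| - c_j\bigr), \]
where $c_j := |S| - \#\{u \bmod p_j : u \in S\}$ counts the ``extra'' elements of $S$ beyond the first representative in each residue class mod $p_j$. On a residue class of multiplicity $m$ this contributes $m-1 \le \binom{m}{2}$, so summing over classes yields $c_j \le |\{(u_1,u_2) \in S \times S : u_1 < u_2,\ p_j \mid u_2 - u_1\}|$. Swapping the order of summation,
\[ \sum_{j=1}^t c_j \;\le\; \sum_{\substack{u_1, u_2 \in S\\ u_1 < u_2}} \#\{j : p_j \mid u_2 - u_1\} \;\le\; \binom{|S|}{2} \cdot \log_2|U|, \]
since a positive integer below $|U|$ has at most $\log_2|U|$ distinct prime factors. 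This gives $|h(S)| \ge t|S| - \tfrac12 |S|^2 \log|U|$, which is stronger than required.

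The main obstacle is the double-counting step: the natural quantity to control is the number of distinct residues per prime, whereas the ``differences have few prime factors'' input is per pair. Bridging the two requires the $m-1 \le \binom{m}{2}$ trick, which is cheap but essential — without it one would only get a bound in terms of the $\ell_2$-sized ``collision count'' $\sum_j \sum_r \binom{m_{j,r}}{2}$, which is larger than needed. A secondary concern is simply verifying that $t$ primes of size $\Theta(t\log t)$ fit within the $4t^2\log t$ universe budget; this is routine via an explicit version of the prime-counting bound applied to the interval $[t, 4t\log t]$.
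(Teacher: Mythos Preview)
Your proposal is correct and essentially identical to the paper's proof: the same prime-residue construction, the same universe bound, and the same collision argument (the paper phrases it as $t|S|-|h(S)|\le\sum_{u\neq u'}|h(u)\cap h(u')|$ and then bounds each intersection by $\log|U|$, which is precisely your double-count after swapping the order of summation). The only cosmetic differences are your wider prime window $[t,4t\log t]$ versus the paper's $[2t\log t,4t\log t]$, and the factor-$\tfrac12$ you gain by counting unordered rather than ordered pairs.
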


\begin{proof}
We start with the construction of $h$. By the Prime Number Theorem, there exist~$t$ primes $p_1, \dots, p_t$ in the interval $[2t \log t, 4t \log t]$ (for large enough $t$, see~\cite[Corollary~3]{RosserS62} for the quantitative version). Let $U' = \{ (i, j) : 1 \leq i \leq t, 0 \leq j < p_i \}$, then $|U'| \leq 4 t^2 \log t$. We identify $U$ with $[|U|]$ in an arbitrary way and define $h(u) = \{ (i, u \bmod p_i) : 1 \leq i \leq t \}$ for~$u \in U$.

In order to prove the first property, let us define the \emph{collision number} of two distinct elements $u, u' \in U$ as $|h(u) \cap h(u')|$. It is easy to see that the collision number of any such pair is at most $\log |U|$: For any prime $p_i$, we have that $u \bmod p_i = u' \bmod p_i$ if and only if $p_i$ divides $u - u'$. Since $u - u'$ has absolute value at most $U$, there can be at most $\log |U|$ distinct prime factors $p_i$ of $u - u'$. It follows that $t |S| - |h(S)| \leq \sum_{u, u' \in S} |h(u) \cap h(u')| \leq |S|^2 \log |U|$.

Finally, the function can be efficiently evaluated: Computing the primes $p_1, \dots, p_t$ takes time $\Order(t \log t \log \log t)$ using Eratosthenes' sieve, for example. After this precomputation, evaluating $h(u)$ in time $\Order(t)$ is straightforward.
\end{proof}

\begin{lemma}[Universe Reduction] \label{lem:dimensionality-reduction-deterministic}
Let $\mathcal S_1, \dots, \mathcal S_k$ over the universe $U = \bigcup_\tau U_\tau$ be a Hybrid Problem instance of maximum set size $s = \max_{S_i \in \mathcal S_i} |S_i|$, and let $t$ be a parameter. In time $\widetilde\Order(mt)$ we can compute a number $\Delta \geq 0$ and a new Hybrid Problem instance $\mathcal S_1', \dots, \mathcal S_k'$ over a small universe $U' = \bigcup_\tau U_\tau'$ of size $|U'| = \Order(t^2 \log t)$ such that:
\begin{enumerate}
\item\label{lem:dimensionality-reduction-deterministic:itm:1} The sets $S_i \in \mathcal S_i$ and the sets $S_i' \in \mathcal S_i'$ stand in one-to-one correspondence.
\item\label{lem:dimensionality-reduction-deterministic:itm:2} For all $S_1 \in \mathcal S_1, \dots, S_k \in \mathcal S_k$, it holds that:
\begin{equation*}
    \left| t \mult \Val(S_1, \dots, S_k) - \Val(S_1', \dots, S_k') - \Delta \right| = \Order(s^2 \log |U|).
\end{equation*}
\end{enumerate}
\end{lemma}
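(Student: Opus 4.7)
The plan is to apply \autoref{lem:bloom-filter-deterministic} \emph{separately on each part} $U_\tau$ of the partition: for every $\tau \in \{0,1\}^k$ invoke the lemma on $U_\tau$ with parameter $t$, obtaining a hash map $h_\tau$ into a fresh universe $U'_\tau$ with $|U'_\tau| \leq 4t^2 \log t$. I then let $U' = \bigsqcup_\tau U'_\tau$ (carrying over the partition) and define the new sets by $S'_i := \bigcup_{\tau} h_\tau(S_i \cap U_\tau)$, so in particular $U'_\tau \cap S'_i = h_\tau(S_i \cap U_\tau)$ for every $\tau$. Item~\ref{lem:dimensionality-reduction-deterministic:itm:1} is then immediate, and since $k$ is a fixed constant $|U'| = O(2^k t^2 \log t) = O(t^2 \log t)$. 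The time bound is also clear: for each of the $m$ input incidence records $(S_i, u)$ I identify the unique $\tau$ with $u \in U_\tau$, evaluate $h_\tau(u)$ in $\widetilde O(t)$ time, and emit $t$ output records, for a total of $\widetilde O(mt)$.

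For the approximation claim in item~\ref{lem:dimensionality-reduction-deterministic:itm:2} I would expand both $\Val_\tau(S_1,\dots,S_k)$ and $\Val_\tau(S'_1,\dots,S'_k)$ by inclusion-exclusion as
\[
    \Val_\tau \;=\; \sum_{J \subseteq I_0(\tau)} (-1)^{|J|}\, \Bigl|\, U_\tau \cap \bigcap_{i \in I_1(\tau) \cup J} S_i \,\Bigr|
\]
(and the analogous expression on the primed side), where $I_b(\tau) = \{i : \tau[i] = b\}$. The heart of the proof is the per-term approximation
\[
    \Bigl|\bigcap_{i \in K} h_\tau(S_i \cap U_\tau)\Bigr| \;=\; t \cdot \Bigl|\, U_\tau \cap \bigcap_{i \in K} S_i\,\Bigr| \,\pm\, O(s^2 \log |U|) \qquad (K \ne \emptyset),
\]
which I would derive from \autoref{lem:bloom-filter-deterministic}. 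The lower bound uses $h_\tau(T) \subseteq \bigcap_{i\in K} h_\tau(S_i \cap U_\tau)$ for $T := \bigcap_{i\in K}(S_i \cap U_\tau)$, together with the hashing property applied to $T$ (which has size $\leq s$). The upper bound is the main technical step: any element in the intersection of hashed sets either comes from the hash of some single common preimage $a \in T$ (contributing at most $|h_\tau(T)| \leq t|T|$ elements), or from a nontrivial collision $h_\tau(a) \cap h_\tau(b)$ with $a \neq b$; the total number of such collision elements is at most $\sum_{i,j \in K} |S_i \cap U_\tau| \cdot |S_j \cap U_\tau| \cdot \log|U| = O(s^2 \log|U|)$, because any two distinct elements share at most $\log|U|$ hash coordinates (as established inside the proof of \autoref{lem:bloom-filter-deterministic}).

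Summing the per-term bounds over the constantly many $\tau$ and $J$ absorbs all $K \neq \emptyset$ contributions into the promised $O(s^2 \log|U|)$ total error, and leaves behind exactly one constant offset coming from the unique $K = \emptyset$ case (namely $\tau = \vec 0$, $J = \emptyset$): on the original side this contributes $t|U_{\vec 0}|$ and on the primed side $|U'_{\vec 0}|$. Setting $\Delta := t|U_{\vec 0}| - |U'_{\vec 0}|$ therefore yields $|t \cdot \Val(S_1,\dots,S_k) - \Val(S'_1,\dots,S'_k) - \Delta| = O(s^2 \log|U|)$ as required.

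The only remaining obstacle is the nonnegativity requirement $\Delta \geq 0$, which could fail in the corner case $|U_{\vec 0}| < |U'_{\vec 0}|/t \leq 4t\log t$. I would handle this by a small case split: whenever $|U_{\vec 0}|$ falls below this threshold, I skip the hashing of the $\vec 0$-piece entirely and instead take $U'_{\vec 0}$ to be $t$ disjoint copies of $U_{\vec 0}$, with $S'_i \cap U'_{\vec 0}$ set to $t$ copies of $S_i \cap U_{\vec 0}$. This scales the $\vec 0$-piece value exactly by $t$, so its contribution to $\Delta$ becomes $0$, and since $|U'_{\vec 0}| = t|U_{\vec 0}| \leq 4t^2\log t$ in this regime, both the overall size bound $|U'| = O(t^2\log t)$ and the $\widetilde O(mt)$ time bound are preserved.
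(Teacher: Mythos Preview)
Your proof is correct and arrives at the same construction and the same value of $\Delta$ as the paper, but the argument for Property~\ref{lem:dimensionality-reduction-deterministic:itm:2} takes a genuinely different route. The paper does \emph{not} expand $\Val_\tau$ by inclusion--exclusion. Instead it fixes $S_1,\dots,S_k$, sets $S := S_1\cup\dots\cup S_k$, and first analyses the ideal case where $S$ is hashed \emph{perfectly} (i.e.\ $|h(S)|=t|S|$): under that assumption one checks directly that $\Val_\tau$ scales exactly by $t$ for every $\tau\neq 0^k$, and by $t$ up to the additive offset $\Delta$ for $\tau=0^k$. To remove the perfect-hashing assumption the paper introduces an auxiliary map $h^*$ obtained from $h$ by greedily repairing collisions on $S$; the number of reassigned coordinates is at most $2(t|S|-|h(S)|)=O(s^2\log|U|)$ by a single application of the hashing property of \autoref{lem:bloom-filter-deterministic} to $S$, and this immediately bounds $|\Val(h(S_1),\dots,h(S_k))-\Val(h^*(S_1),\dots,h^*(S_k))|$. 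Your approach instead splits $\Val_\tau$ into $O(1)$ intersection terms and bounds each one, which requires the internal pairwise collision estimate $|h_\tau(a)\cap h_\tau(b)|\le\log|U|$ from inside the proof of \autoref{lem:bloom-filter-deterministic}, not just its statement. Both arguments are clean; the paper's buys a slightly more self-contained use of the hashing lemma and avoids the term-by-term bookkeeping, while yours avoids the auxiliary $h^*$. A second, purely cosmetic difference: the paper applies the ``$t$ disjoint copies'' fallback to \emph{every} $U_\tau$ with $|U_\tau|\le 4t\log t$, whereas you apply it only to $U_{0^k}$; as you correctly observe, only the $\tau=0^k$ case matters for $\Delta\ge 0$.
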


\begin{proof}
We first describe how to construct the new instance. The first goal is to design individual universe reductions for all subuniverses $U_\tau$, that is, we construct new universes~$U_\tau'$ and functions $h_\tau$ mapping $U_\tau$ to size-$t$ subsets of $U_\tau$. We distinguish two cases:
\begin{itemize}
\item If $|U_\tau| \leq 4 t \log t$, then we simply take $U_\tau'$ as $t$ copies of $U_\tau$ and let $h_\tau$ be the function which maps any element to its $t$ copies in $U_\tau$. It holds that $|U_\tau'| = t \mult |U_\tau| \leq 4t^2 \log t$.
\item If $|U_\tau| > 4 t \log t$, then we apply \autoref{lem:bloom-filter-deterministic} with parameter $t$ to obtain $U_\tau'$ and $h_\tau$. The lemma guarantees that $|U_\tau'| \leq 4 t^2 \log t$.
\end{itemize}
Next, we assemble these individual reductions into one. Set $U' = \bigcup_\tau U_\tau'$, where we treat the sets $U_\tau'$ as disjoint. Since in both of the previous two cases we have $|U_\tau| = \Order(t^2 \log t)$ it follows that $|U| = \sum_\tau |U_\tau| = \Order(t^2 \log t)$. Let $h$ be the function which is piece-wise defined by the $h_\tau$'s, that is, $h$ returns $h_\tau(u)$ on input~$u \in U_\tau$. Recall the notation $h(S) = \bigcup_{u \in S} h(u)$. The new Hybrid Problem instance is constructed by hashing every set $S_i \in \mathcal S_i$ into the smaller universe, that is, we set $S_i' := h(S_i) \in \mathcal S_i'$. Property~\ref{lem:dimensionality-reduction-deterministic:itm:1} is immediate from this construction, and the computation takes time $\widetilde\Order(mt)$.

It remains to prove Property~\ref{lem:dimensionality-reduction-deterministic:itm:2}. For the remainder of the proof fix some sets $S_1, \dots, S_k$ and let $S = S_1 \cup \dots \cup S_k$ (clearly,~$S$ has size $\Order(s)$). We start with the (unrealistic) assumption that $S$ is hashed \emph{perfectly}, that is, $|h(S)| = t |S|$. In this case we claim that:
\begin{itemize}
\item $t \mult \Val_\tau(S_1, \dots, S_k) = \Val_\tau(h(S_1), \dots, h(S_k))$ for all $\tau \neq 0^k$,
\item $t \mult \Val_\tau(S_1, \dots, S_k) = \Val_\tau(h(S_1), \dots, h(S_k)) + \Delta$ for $\tau = 0^k$, where $\Delta := t \mult |U_{0^k}| - |U_{0^k}'|$. 
\end{itemize}
Indeed, if $S$ is hashed perfectly then we exactly scale the number of satisfying elements by a factor of $t$ for every type $\tau \neq 0^k$. This holds because a satisfying assignment for $\tau \neq 0^k$ corresponds to some element of the universe $u \in U_{\tau}$ for which $u \in S_i$ for all $i$'s such that $\tau[i] = 1$. The perfect hashing implies that the element $u$ in these sets $S_i$ gets mapped to $t$ different elements in the new universe, and since there are no collisions these form $t$ satisfying assignments in the hashed instance. The type $\tau = 0^k$ is exceptional because each satisfying assignment does not correspond to any $u \in U_{0^k}$. Instead, the hashing scales the number of \emph{falsifying} elements of type $0^k$, $|U_{0^k} \cap S|$. The number of \emph{satisfying} elements of type $0^k$, $|U_{0^k} \setminus S|$, is preserved up to an additive error of exactly $\Delta$.

We will now remove the unrealistic assumption that $h$ is hashed perfectly. The strategy is to define another function $h^*$ obtained from $h$ by artificially making the hashing with $S$ perfect. To that end, we list the elements in $S$ in an arbitrary order $s_1, \dots, s_{|S|}$, and start with the assignment $h^*(s_j) = h(s_j)$. As long as there exist indices $i < j$ such that $h^*(s_i)$ and $h^*(s_j)$ share a common element $z$, we reassign $h^*(s_j) := h^*(s_j) \setminus \{z\} \cup \{z'\}$ for some unused universe element $z' \in U'$. The function $h^*$ obtained in this way also maps elements of $U$ to size-$t$ subsets of $U'$ and hashes $S$ perfectly. Let $Z$ be the set of all pairs of elements $z$ and $z'$ that occurred in the process; since there are exactly $t|S| - |h(S)|$ iterations we have $|Z| \leq 2t|S| - 2|h(S)|$ and by \autoref{lem:bloom-filter-deterministic} it follows that $|Z| = \Order(s^2 \log |U|)$. By the definition of $h^*$, it is clear that $|\Val(h(S_1), \dots, h(S_k)) - \Val(h^*(S_1), \dots, h^*(S_k))| \leq |Z|$. Therefore, by the previous paragraph (applied with $h^*$) and by an application of the triangle inequality, we obtain:
\begin{itemize}
\item $|t \mult \Val_\tau(S_1, \dots, S_k) - \Val_\tau(h(S_1), \dots, h(S_k))| = \Order(s^2 \log |U|)$ for all $\tau \neq 0^k$,
\item $|t \mult \Val_\tau(S_1, \dots, S_k) - \Val_\tau(h(S_1), \dots, h(S_k)) - \Delta| = \Order(s^2 \log |U|)$ for $\tau = 0^k$.
\end{itemize}
The claimed Property~\ref{lem:dimensionality-reduction-deterministic:itm:2} is now immediate by summing over all types~$\tau$ and by another application of the triangle inequality.

Finally, it remains to prove that $\Delta \geq 0$. There are two cases depending on how the set~$U_{0^k}'$ was constructed: In the first case of the construction we have $t \mult |U_{0^k}| = |U_{0^k}'|$ and thus $\Delta = 0$. In the second case we have $|U_\tau'| \leq 4 t^2 \log t < t \mult |U_\tau|$ and thus $\Delta = t \mult |U_{0^k}| - |U_{0^k}'| > 0$.
\end{proof}

Having established the universe reduction, we can finally prove \autoref{lem:hybrid-to-ip}.
\newcommand\ALG{\operatorname{ALG}}

\begin{proof}[Proof of \autoref{lem:hybrid-to-ip}]
The algorithm consists of three steps, which are implemented in the same way for all combinations of maximization versus minimization and exact versus approximate computation.

\begin{enumerate}
\itemsep0.5em
\item \emph{(Eliminating heavy sets.)} We say that a set $S_i \in \calS_i$ is \emph{heavy} if $|S_i| > s(m)$, and \emph{light} otherwise. Our first goal is to eliminate all heavy sets. Since the total cardinality of all sets $S_i$ is bounded by $m$, there can be at most $O(m/s(m))$ heavy sets. Therefore, we can brute-force over every such set $S_i$ and solve the remaining Hybrid Problem on $k-1$ set families using the baseline algorithm in time $O(m^{k-1})$. Afterwards, we can safely remove all heavy sets. Overall, this step takes time $O(m^k/s(m))$.

\item \emph{(Reduction to \kMaxIP{k} or \kMinIP{k}.)} In the remaining instance we have that $|S_i| \leq s(m)$ for all sets~$S_i$. Therefore, we can apply the universe reduction from \autoref{lem:dimensionality-reduction-deterministic} (with some parameter $t$ to be specified in the next step) to obtain an instance $\calS'_1,\dots,\calS'_k$ over a smaller universe $U' = \bigcup_\tau U'_{\tau}$ of size $\Order(t^2 \log t)$, and an offset $\Delta \geq 0$.

The Hybrid Maximization Problem instance $\calS'_1,\dots,\calS'_k$ reduces to $\kMaxIP{k}$ in the natural way: Recall that \kMaxIP{k} is the same as the Basic Problem of type $\tau = 1^k$. Hence, we can apply \autoref{obs:hybrid-to-basic} to reduce to an instance of \kMaxIP{k} with $n = \Order(m)$ vectors in dimension $\Order(t^2 \log t)$ in time $\Order(n |U'|) = \Order(n t^2 \log t)$. An analogous reduction works for Hybrid Minimization Problems and \kMinIP{k}.

\item \emph{(Recovering the optimal value.)} Solve (or approximate) the constructed \kMaxIP{k} instance and let $\ALG'$ denote the output. Then compute $\ALG := (\ALG' + \Delta) / t$ and return $\ALG$ rounded to an integer. The precise way of rounding depends on maximization versus minimization and exact versus approximate, see the following analysis.
\end{enumerate}

\noindent Let $\varepsilon > 0$ be a constant which we will specify later, and set $t = C s(m)^2 \log m$ for some sufficiently large constant $C = C(\varepsilon)$. Then by Property~\ref{lem:dimensionality-reduction-deterministic:itm:2} of \autoref{lem:dimensionality-reduction-deterministic} we have
\begin{equation*}
    \left| \Val(S_1, \dots, S_k) - \frac{\Val(S_1', \dots, S_k') + \Delta}t \right| = \Order\left(\frac{s(m)^2 \log m}t\right) < \varepsilon.
\end{equation*}
In particular, it holds that 
\begin{equation} \label{eqn:ineq-opt}
    \left| \OPT - \frac{\OPT' + \Delta}t \right| < \varepsilon,
\end{equation}
where $\OPT$ and $\OPT'$ are the optimal values of the original and the reduced instance, respectively. As the new universe has size $\Order(t^2 \log t) = \widetilde\Order(s(m)^4 \log m^2)$ as claimed, we can indeed use the efficient $\Order(m^k / s(m))$-time \kMaxIP{k} or \kMinIP{k} algorithm in the third step. The total running time is as stated: Recall that $s(m) \leq m^{1/6}$ and thus all previous steps run in time $\Order(m^k / s(m))$. It remains to argue about the guarantees of the reduction; we need to consider three cases:
\begin{itemize}
\item\emph{(Exact maximization or minimization: $c = 1$.)} It suffices to set $\varepsilon < \frac12$. Since we can exactly compute $\ALG' = \OPT'$, by rounding $\ALG = (\ALG'+\Delta)/t$ to the nearest integer, we obtain the only integer in the interval $((\OPT' + \Delta) / t - \frac12, (\OPT' + \Delta) / t + \frac12)$, and thus we output $\OPT$.
\item\emph{(Approximate maximization: $c > 1$.)} We have $c^{-1} \OPT' \leq \ALG' \leq \OPT'$ and therefore
\begin{align*}
    \ALG &= \frac{\ALG' + \Delta}t \leq \frac{\OPT' + \Delta}t \stackrel{{\eqref{eqn:ineq-opt}}}{\leq} \OPT + \varepsilon, \\
    \ALG &= \frac{\ALG' + \Delta}t
        \geq \frac{c^{-1}(\OPT' + \Delta)}t 
        \stackrel{{\eqref{eqn:ineq-opt}}}{\geq} c^{-1} (\OPT - \varepsilon)
        \geq c^{-1} \OPT - \varepsilon,
\end{align*}
where in the first inequality of the second line we used both $\ALG' \geq c^{-1}\OPT'$ and $c > 1$. From these bounds we derive that the algorithm should return $\ceil{\ALG - \varepsilon}$. Indeed, as $\ceil{\ALG - \varepsilon} \leq \OPT$ this is always a feasible solution. Moreover, the solution is $\frac c{1-2\varepsilon}$-approximate: If $\OPT = 0$, then $\ceil{\ALG - \varepsilon} = 0$ (if we set $\varepsilon < \frac12$). If $\OPT \geq 1$, then $\ceil{\ALG - \varepsilon} \geq \frac1c \OPT - 2\varepsilon \geq \frac{1-2\varepsilon}c \OPT$. Setting~$\varepsilon$ small enough yields approximation ratio $c + \varepsilon'$, for any $\varepsilon' > 0$.
\item \emph{(Approximate minimization: $c > 1$.)} We have $\OPT' \leq \ALG' \leq c \mult \OPT'$ and therefore
\begin{align*}
    \ALG &= \frac{\ALG' + \Delta}t \leq \frac{c \mult \OPT' + \Delta}t 
        \leq \frac{c \mult (\OPT + \Delta)}t 
        \stackrel{{\eqref{eqn:ineq-opt}}}{\leq} c \mult \OPT + c \mult \varepsilon, \\
    \ALG &= \frac{\ALG' + \Delta}t \geq \frac{\OPT' + \Delta}t 
        \stackrel{{\eqref{eqn:ineq-opt}}}{\geq} \OPT - \varepsilon.
\end{align*}
In this case the algorithm should return $\floor{\ALG + \varepsilon}$. This solution is always feasible as $\OPT \leq \floor{\ALG + \varepsilon}$. Moreover, the solution is $c(1+2\varepsilon)$-approximate: If $\OPT = 0$, then $\floor{\ALG + \varepsilon} = 0$ (if we set $\varepsilon < \frac12$). If $\OPT \geq 1$, then $\floor{\ALG + \varepsilon} \leq c \mult \OPT + (c+1)\varepsilon \leq c(1+2\varepsilon) \OPT$. We may again set $\varepsilon$ small enough to obtain approximation ratio $c + \varepsilon'$, for any $\varepsilon' > 0$. \qedhere
\end{itemize}
\end{proof}

\subsection{Reducing \texorpdfstring{\boldmath$\OptSP_k$}{OptSPk} Formulas to the Hybrid Problem}\label{sec:reduction-to-hybrid}
In this section we give the first phase of the reduction, where we reduce $\Opt(\psi)$ to the Hybrid Problem. The main lemma is the following. As before, let $s(m) \leq m^{1/6}$ be a nondecreasing function and let $c \geq 1$ be constant.

\begin{lemma} \label{lem:optsp-to-hybrid}
Let $k \geq 2$. If the Hybrid Problem can be $c$-approximated in time $\Order(m^k / s(m))$, then $\Opt(\psi)$ can be $c$-approximated in time $\Order(m^k / s(\!\sqrt[k+1]m))$, for any $\OptSP_k$ formula~$\psi$.
\end{lemma}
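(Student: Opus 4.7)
The plan is to carry out the four-step transformation outlined in \autoref{sec:technical-overview}, which progressively constrains the shape of $\psi$ until it can be rewritten directly as an instance of the Hybrid Problem. The approach parallels the first phase of Gao et al.'s reduction~\cite{GaoIKW18}, with the simplification of handling cross predicates early as in~\cite{BringmannFK19}, so that by the end we have a purely ``bipartite'' structure where all binary predicates are of the form $E_i(x_i,y)$ and all unary constraints are encoded in a universe partition.

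In more detail, Step~1 eliminates predicates of arity $\geq 3$: for each hyperpredicate $R$ and each maximization variable $x_i$ participating in $R$, values of $x_i$ of high $R$-degree (above a threshold $\Delta := m^{1/(k+1)}$) are enumerated, reducing to $\OptSP_{k-1}$-instances which can be solved by the baseline algorithm, while low-degree values permit $R$ to be flattened into arity-$2$ relations by composing arguments into compound domain elements. Step~2 removes binary predicates $E(x_i,x_j)$ between two maximization variables by the same heavy/light dichotomy on one of the endpoints. Step~3 collapses parallel edges between each pair $(x_i,y)$ into a single effective relation $E_i'(x_i,y)$ by evaluating the relevant sub-formula in preprocessing. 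Step~4 absorbs all unary predicates: those on $x_i$ restrict $X_i$ in place; those on $y$, together with the incidence tuple $(E_1'(x_1,y),\dots,E_k'(x_k,y)) \in \{0,1\}^k$, partition the universe $U$ into the $2^k$ parts $U_\tau$ required by the Hybrid Problem. The families $\mathcal S_i := \{\,\{y : E_i'(x_i,y)\} : x_i \in X_i\,\}$ then yield a Hybrid Problem instance whose value equals $\Opt(\psi)$ on every assignment, so that a $c$-approximation of the Hybrid Problem transports back to a $c$-approximation of $\Opt(\psi)$ without further loss.

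\textbf{Main obstacle.} The delicate point is the running-time bookkeeping. Each brute-forced step contributes at most $\Order(m/\Delta) = \Order(m^{k/(k+1)})$ enumerations, each spawning an $\Order(m^{k-1})$ baseline subproblem, so the total brute-force work across the four steps is $\Order(m^{k-1/(k+1)})$; the remaining instance of sparsity $\Order(m)$ is then solved by a single call to the hypothesised Hybrid Problem oracle in time $\Order(m^k/s(m))$. Both summands must be bounded by $\Order(m^k/s(\sqrt[k+1]{m}))$: for the brute-force part, the bound $s(m^{1/(k+1)}) \leq m^{1/(6(k+1))} \leq m^{1/(k+1)}$ (using $s(x) \leq x^{1/6}$) gives $m^{-1/(k+1)} \leq 1/s(m^{1/(k+1)})$; for the oracle call, $1/s(m) \leq 1/s(m^{1/(k+1)})$ since $s$ is nondecreasing. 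Two additional care points are that the heavy/light splits in successive steps must compose consistently without re-introducing already eliminated features, and that each transformation must be \emph{exactly} value-preserving so that the approximation factor $c$ is not degraded to $c + \varepsilon$---this is in contrast to \autoref{lem:hybrid-to-ip}, where the universe reduction inherently loses a small additive slack.
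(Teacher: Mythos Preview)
There is a genuine gap in Step~2, and it is precisely where the loss from $s(m)$ to $s(\sqrt[k+1]{m})$ originates. Brute-forcing heavy endpoints does \emph{not} eliminate cross predicates from the remaining (all-light) instance: the cross edges between light vertices are still present, and the formula still refers to them. In the paper, one first splits $\psi$ into the formulas $\psi_{\ell,i,j}$ with a positive cross-edge conjunct (solved exactly in $O(m^{k-1/2})$ via \autoref{lem:positive-cross-edge}) and $\psi_0 = \big(\bigwedge_{\ell,i,j}\overline{E}_\ell(x_i,x_j)\big)\wedge\phi_0$. The formula $\psi_0$ is \emph{not} cross-edge-free; dropping the negative conjuncts yields a genuinely cross-edge-free $\psi_1$, but $\psi_0\neq\psi_1$. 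The paper therefore partitions each $X_i$ into $g=O(m/\sqrt[k+1]{m})$ groups of total degree $O(\sqrt[k+1]{m})$, $c$-approximates $\Max(\psi_1)$ on all $g^k$ group combinations, keeps the top $\binom{k}{2}mn^{k-2}+1$ of them, and re-solves those against $\psi_0$ with the baseline. This grouping is exactly the source of the savings degradation: each oracle call is on an instance of size $O(\sqrt[k+1]{m})$, yielding $O(m^k/s(\sqrt[k+1]{m}))$ in total. Your claim that ``the remaining instance of sparsity $O(m)$ is then solved by a single call'' achieving $O(m^k/s(m))$ is therefore not attainable, and your weakening $1/s(m)\leq 1/s(\sqrt[k+1]{m})$ is not voluntary slack but the whole point.

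Step~3 is also incorrect as stated: one cannot in general collapse the $r$ parallel relations $E_1(x_i,y),\dots,E_r(x_i,y)$ into a single bit $E_i'(x_i,y)$, since $\phi$ may couple the $r$ bits at position $i$ with those at other positions. For $k=r=2$ and $\phi=(E_1(x_1,y)\wedge E_1(x_2,y))\vee(E_2(x_1,y)\wedge E_2(x_2,y))$, the four row patterns over $(E_1(x_1,y),E_2(x_1,y))$ are pairwise distinct, so no function $\phi'(E_1'(x_1,y),E_2'(x_2,y))$ with one bit per side can reproduce $\phi$. The paper instead replaces each $y$ by $2^{rk}$ copies $y_\alpha$ indexed by the full color tuple. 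Likewise, your Step~1 mechanism (``flattening by composing arguments into compound domain elements'') is not what the paper does: since every arity-$\geq 3$ predicate on $\{x_1,\dots,x_k,y\}$ must touch at least two $x$-variables, the paper introduces an auxiliary relation $N(x_i,x_j)$ recording any (hyper-)edge between $x_i$ and $x_j$, and reduces the hyperedge case to the positive-cross-edge \autoref{lem:positive-cross-edge}.
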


Recall that we only have to deal with $\OptSP_k = \OptSP_{k,1}$ formulas, as any $\OptSP_{k,\ell}$ problem with $\ell > 1$ directly admits an improved algorithm; see \autoref{sec:improved-alg-multiple-counting}. As explained in~\autoref{sec:technical-overview}, we prove \autoref{lem:optsp-to-hybrid} by progressively simplifying $\Opt(\psi)$ in four steps:
\begin{enumerate}
\item Remove all \emph{hyperedges}, that is, $\psi$ no longer contains predicates of arity $\geq 3$ so an instance of $\Opt(\psi)$ can be thought of as a (colored) graph.
\item Remove all \emph{cross edges}, that is, edges between vertices $x_i$ and $x_j$ that we maximize over.
\item Remove all \emph{parallel edges} (or alternatively, colored edges), that is, we combine parallel edges into simple edges.
\item Remove unary predicates, finally turning the $\Opt(\psi)$ instances into graphs. At this point it becomes simple to rewrite $\Opt(\psi)$ as a Hybrid Problem.
\end{enumerate}

\paragraph*{Step 1: Removing Hyperedges}
As a first step, we eliminate all \emph{hyperpredicates}, that is, predicates of arity $\geq 3$. Formally, we prove the following lemma.

\begin{lemma} \label{lem:hyperedges}
Suppose that, for any $\OptSP_k$ formula $\psi$ not containing hyperpredicates, $\Opt(\psi)$ can be $c$-approximated in time $\Order(m^k / s(m))$. Then $\Opt(\psi)$ can be $c$-approximated in time $\Order(m^k / s(m))$ for any $\OptSP_k$ formula $\psi$.
\end{lemma}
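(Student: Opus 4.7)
The plan is to eliminate hyperpredicate atoms from $\phi$ one at a time via brute-forcing, proving the lemma by induction on the number $h$ of atoms of arity $\geq 3$ in $\phi$. I would interpret the lemma as a statement uniform over $k \geq 2$, so that the inductive hypothesis may be invoked at strictly smaller values of $k$. The base case $h = 0$ is immediate from the assumption of the lemma.

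For the inductive step, pick any hyperpredicate atom $R(z_1, \ldots, z_a)$ with $a \geq 3$ in $\phi$. First I would normalize via \emph{diagonal preprocessing}: whenever a variable is repeated among $z_1, \ldots, z_a$, I precompute a new relation of strictly smaller arity and substitute the atom accordingly, so that afterwards all $a$ positions hold distinct variables, of which at most one is the counting variable $y$. Consequently, at least $a - 1 \geq 2$ of the atom's arguments are max variables. I would then brute-force over $a - 2$ of these max variables: for each of the $m^{a-2}$ brute-force assignments, substituting the fixed values throughout $\phi$ yields a specialized $\OptSP_{k - (a-2), 1}$ formula $\phi'$ in which (i) the target atom becomes a binary or unary predicate, precomputable from the original relation $R$ in time $\Order(|R|) = \Order(m)$, and (ii) no other atom has higher arity than it had in $\phi$. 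In particular, $\phi'$ contains at most $h - 1$ hyperpredicate atoms on a relational structure of size $\Order(m)$.

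By the inductive hypothesis (applied at parameter $k' = k - (a-2)$ with strictly fewer hyperpredicate atoms), each sub-problem $\Opt(\phi')$ can be $c$-approximated in time $\Order(m^{k'}/s(m))$. Summing over all $m^{a-2}$ brute-force assignments yields total running time $\Order(m^k / s(m))$, and $\Opt(\psi)$ is recovered as the $\opt$ of the sub-problem answers; the $c$-approximation guarantee is preserved because the $\opt$ of $c$-approximate values is itself a $c$-approximation of the $\opt$ of the true values. The main point to be careful about is the uniformity of the induction over $k$, since the brute-force step strictly decreases the number of max variables; this is natural in the paper's context, where the lemma is intended to hold for all $k \geq 2$ simultaneously, but it is the step I would want to make fully explicit in a complete proof.
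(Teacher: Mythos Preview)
Your brute-force approach has a genuine gap: it loses the savings $s(m)$ precisely when the hyperpredicate has maximal arity $a = k+1$, i.e., when after diagonal preprocessing the atom involves all $k$ max variables together with the counting variable $y$. In that case you brute-force $a-2 = k-1$ max variables and are left with $k' = 1$. Your inductive hypothesis (even under the ``uniform over $k \geq 2$'' reading you propose) does not cover $\OptSP_{1,1}$, and it cannot: any $\OptSP_{1,1}$ instance requires $\Omega(m)$ time just to read the input, so no $\Order(m/s(m))$ bound is available. Concretely, take $k=2$ and a single ternary atom $R(x_1,x_2,y)$ inside $\phi$. Your scheme fixes one of $x_1,x_2$ and solves $n$ many $\OptSP_{1,1}$ subproblems of size up to $m$ each, for a total of $\Order(nm)=\Order(m^2)$ rather than $\Order(m^2/s(m))$. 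Brute-forcing one fewer variable does not help either: with $k'=2$ the atom is still ternary, so the number of hyperpredicate atoms has not decreased and the induction does not progress.

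The paper sidesteps this by a different decomposition that never needs $k'<2$. It observes that any hyperpredicate (arity $\geq 3$, at most one slot taken by $y$) mentions at least two distinct max variables $x_i,x_j$; hence any tuple in such a relation ``connects'' $x_i$ and $x_j$. Defining a new binary relation $N(x_i,x_j)$ recording this connectedness (constructible in $\Order(m)$), the optimum splits as $\OPT = \opt\{\OPT_0,\; \opt_{i\neq j}\OPT_{i,j}\}$, where $\psi_0$ enforces $\bigwedge_{i\neq j}\overline N(x_i,x_j)$ and replaces every hyperpredicate by $\false$, and $\psi_{i,j}$ enforces $N(x_i,x_j)\wedge\phi$. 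Now $\psi_0$ is hyperpredicate-free, so the lemma's hypothesis gives a $c$-approximation in $\Order(m^k/s(m))$; each $\psi_{i,j}$ has the ``positive cross edge'' shape and is solved \emph{exactly} in $\Order(m^{k-1/2})$ by a heavy/light degree argument (the paper's \autoref{lem:positive-cross-edge}). This is where the savings really comes from, and it is the idea your proposal is missing.
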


The proof is quite similar to~\cite[Section 7]{GaoIKW18}. We start with a technical lemma:

\begin{lemma} \label{lem:positive-cross-edge}
Let
\begin{equation*}
    \psi = \opt_{x_1, \dots, x_k} \counting_y \Big( E(x_i, x_j) \land \phi(x_1, \dots, x_k, y) \Big),
\end{equation*}
for some $i, j \in [k], i \neq j$ and arbitrary $\phi$. Then $\Opt(\psi)$ can be solved exactly in time $\Order(m^{k-1/2})$.
\end{lemma}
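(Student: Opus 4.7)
The plan is as follows. For $k \geq 3$, iterating over all $m$ edges $(x_i, x_j) \in E$ and solving the residual formula (over the remaining $k-2$ maximization variables and $1$ counting variable) by the baseline in $\Order(m^{k-2})$ time gives total $\Order(m \cdot m^{k-2}) = \Order(m^{k-1}) \leq \Order(m^{k-1/2})$. The main challenge is $k = 2$, where the naive ``iterate edges then count $y$'s'' approach costs $\Order(m^2)$ but we need $\Order(m^{3/2})$.

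For $k = 2$ I would attack $\opt_{x_1, x_2} \counting_y \bigl( E(x_1, x_2) \wedge \phi(x_1, x_2, y) \bigr)$ via a reduction to triangle counting. First, absorb predicates of $\phi$ not involving $y$ into $E$, producing $E^* \subseteq E$ of size at most $m$. Next, expand $\counting_y \phi^y(x_1, x_2, y)$ by inclusion-exclusion ($[\lnot P] = 1 - [P]$) into a signed combination of $\Order(1)$ terms $\counting_y P_S(x_1, x_2, y)$, one per positive conjunction $P_S$ of $y$-predicates. Finally, for each $P_S$, intersect predicates of the same type into at most one sparse binary relation $E^{(S)}(x_1, y)$, one sparse $F^{(S)}(x_2, y)$, a unary filter $U^{(S)}$ on $y$, and (if hyperpredicates are present) a sparse ternary $R^{(S)}(x_1, x_2, y)$, each of size $\leq m$.

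For each $P_S$ without a hyperpredicate, the per-edge count $\counting_y \bigl( E^{(S)}(x_1, y) \wedge F^{(S)}(x_2, y) \wedge U^{(S)}(y) \bigr)$ is a classical $3$-chromatic triangle-counting query on three sparse relations, solvable for all edges in total time $\Order(m^{3/2})$ via high/low-degree splitting on $y$ at threshold $\sqrt{m}$: heavy $y$'s number at most $\Order(\sqrt{m})$ and contribute $\Order(m)$ work each (scan $E^*$); for light $y$'s, iterating $(x_1, y) \in E^{(S)}$ and then $x_2 \in N_{F^{(S)}}(y)$ gives total work $\leq \sqrt{m} \cdot |F^{(S)}| = \Order(m^{3/2})$. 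Hyperpredicates are handled by enumerating the $\leq m$ tuples of $R^{(S)}$. Aggregating the $\Order(1)$ conjunctions by their $\pm 1$ coefficients and optimizing over $E^*$ yields the final answer in $\Order(m^{3/2})$.

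The main obstacle is converting the arbitrary Boolean formula $\phi$ uniformly into a combination of ``typed'' positive conjunctions that fit into the triangle-counting framework---specifically, handling negations via inclusion-exclusion and separating predicates of distinct types before combining. Once the decomposition is in place, the rest is classical. The approach applies verbatim to the $\min$ version (taking min instead of max over $E^*$; we note the minimum is $0$ whenever some pair is not an edge, but this does not affect the algorithmic cost).
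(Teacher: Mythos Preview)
Your proposal is correct but follows a genuinely different route from the paper's proof, so a brief comparison is in order.

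For $k\ge 3$, you iterate over the $\le m$ edges $(x_i,x_j)\in E$ and solve the residual $\OptSP_{k-2,1}$ problem by the baseline in $\Order(m^{k-2})$, obtaining $\Order(m^{k-1})$. The paper instead brute-forces the $k-2$ variables other than $x_i,x_j$ and then invokes the $k=2$ case, yielding $\Order(m^{k-1/2})$. Your reduction is simpler here and even gives a stronger bound; both suffice for the lemma.

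For $k=2$, the paper splits on the degrees of $x_1$ and $x_2$: it handles the $\Order(\sqrt m)$ heavy $x_1$'s (and $x_2$'s) by the $\Order(m)$ baseline, and for pairs of light vertices it iterates over the $\Order(m)$ edges $E(x_1,x_2)$ and enumerates the $\Order(\sqrt m)$ vertices $y$ adjacent to either $x_1$ or $x_2$, precomputing in one shot the contribution of the remaining ``isolated'' $y$'s. This avoids any algebraic decomposition of $\phi$. Your approach instead expands $[\phi]$ into a $\pm 1$ combination of positive conjunctions and reduces each term to a (weighted) per-edge triangle count, solved in $\Order(m^{3/2})$ by degree-splitting on $y$. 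Both are valid; the paper's argument is shorter and more self-contained, whereas yours is closer in spirit to the triangle-counting machinery used in \autoref{sec:improved-alg-multiple-counting}.

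One imprecision worth flagging: ``absorb predicates of $\phi$ not involving $y$ into $E$, producing $E^*\subseteq E$'' is not quite right as stated, since the $y$-independent predicates need not act as a pure filter---different evaluations can leave different residual formulas $\phi^y$. The clean fix is either to partition the edges of $E$ into $\Order(1)$ classes by the values of all $y$-independent atoms (obtaining one $\phi^y$ per class), or, more uniformly, to run the inclusion--exclusion over \emph{all} atoms and carry the $y$-independent conjuncts as $\Order(1)$-time multiplicative factors per edge. With either fix your argument goes through.
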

\begin{proof}
Let us begin with the simplest case $k = 2$. For a vertex $x$ in the given instance, let $\deg(x)$ denote the total number of records containing~$x$ over all relations. We distinguish the following three cases:
\begin{description}
\item[Case 1: \boldmath$\deg(x_1) \geq \sqrt m$.] We explicitly list all vertices $x_1$ with $\deg(x_1) \geq \sqrt m$; note that there can be at most $\Order(\sqrt m)$ such elements since the sparsity of the $\Max(\psi)$ instance is bounded by $\Order(m)$. The remaining $\MaxSP_1$ formula can be solved in time $\Order(m)$ using the baseline algorithm. In total, this step takes time $\Order(\sqrt m m) = \Order(m^{3/2})$.
\item[Case 2: \boldmath$\deg(x_2) \geq \sqrt m$.] By exchanging the roles of $x_1$ and $x_2$, we deal with this case in the same way as case 1.
\item[Case 3: \boldmath$\deg(x_1) < \sqrt m$ and $\deg(x_2) < \sqrt m$.] Assuming that the previous two cases were executed, we can assume that $\deg(x_1) < \sqrt m$ and $\deg(x_2) < \sqrt m$ for all remaining objects~$x_1, x_2$. We exploit that any non-zero solution $(x_1, x_2)$ of $\Max(\psi)$ satisfies $E(x_1, x_2)$: It suffices to maximize over all $\Order(m)$ edges $E(x_1, x_2)$, counting the number of~$y$'s satisfying~$\phi$. Since $\deg(x_1) < \sqrt m$ and $\deg(x_2) < \sqrt m$, we can enumerate and test all objects $y$ which are connected to either $x_1$ or $x_2$ by some relation in time $\Order(\sqrt m)$. What remains are objects $y$ not connected to either $x_1$ or $x_2$ by any relation. To account for these missing objects, we can substitute $\false$ for all non-unary predicates in $\phi$; what remains is a Boolean function over unary predicates over~$y$. We can precompute the number of $y$'s satisfying that function in linear time, so again the total time is $\Order(m + m \sqrt m) = \Order(m^{3/2})$.
\end{description}

It remains to lift this proof to the general case $k > 2$. We brute-force over all $x$-variables except for $x_i$ and $x_j$. This amounts for a factor $\Order(m^{k-2})$ in the running time. What remains is a $\MaxSP_2$ formula in the shape as before which can be solved exactly in time $\Order(m^{3/2})$ by the previous case analysis. In total this takes time $\Order(m^{k-2} m^{3/2}) = \Order(m^{k-1/2})$. The proof works in exactly the same way for minimization problems.
\end{proof}

\begin{proof}[Proof of~\autoref{lem:hyperedges}]
Let $\psi = \max_{x_1, \dots, x_k} \counting_y \phi(x_1, \dots, x_k, y)$ be a $\MaxSP_k$ formula possibly containing some hyperpredicates. We introduce a new binary relation $N(x_i, x_j)$ defined as follows: For any $x_i, x_j \in V$ it holds that $N(x_i, x_j) = \true$ if and only if $x_i$ and $x_j$ are connected by some (hyper-)edge. Observe that any (hyper-)edge contributes to at most a constant number of records $N(x_i, x_j)$, so we can construct $N$ in time $\Order(m)$ and the sparsity blows up only by a constant factor. We can now rewrite $\psi$ via 
\begin{equation*}
    \psi_0 = \max_{x_1, \dots, x_k} \counting_y \Bigg(\Bigg(\bigwedge_{i \neq j} \overline N(x_i, x_j)\Bigg) \land \phi_0(x_1, \dots, x_k, y)\Bigg)
\end{equation*}
and
\begin{equation*}
    \psi_{i,j} = \max_{x_1, \dots, x_k} \counting_y \Big(N(x_i, x_j) \land \phi(x_1, \dots, x_k, y)\Big),
\end{equation*}
where $\phi_0$ is obtained from $\phi$ by replacing all occurrences of hyperpredicates by $\false$. It follows that we can express
\begin{equation*}
    \OPT = \max\{\OPT_0, \max_{i \neq j} \OPT_{i,j} \},
\end{equation*}
where $\OPT_0$ is the optimal value of $\psi_0$, and $\OPT_{i,j}$ is the optimal value of $\psi_{i,j}$. Observe that $\psi_0$ is a $\MaxSP_k$ formula not involving any hyperpredicates, so we can by assumption $c$-approximate $\OPT_0$ in time $T(m)$. Moreover, the formulas $\psi_{i,j}$ are precisely in the shape to apply~\autoref{lem:positive-cross-edge}, so we can compute $\OPT_{i,j}$ exactly in time $\Order(m^{k-1/2})$.
\end{proof}

\paragraph*{Step 2: Removing Cross Edges}
Next, the goal is to remove all binary predicates $E(x_i, x_j)$ between two $x$-variables. Let us call these predicates $E(x_i, x_j)$ \emph{cross predicates} and the associated entries $(x_i, x_j)$ \emph{cross edges}.

\begin{lemma} \label{lem:cross-edges}
Suppose that, for any $\OptSP_k$ formula $\psi$ not containing hyperpredicates and cross predicates, $\Opt(\psi)$ can be $c$-approximated in time $\Order(m^k / s(m))$. Then $\Opt(\psi)$ can be $c$-approximated in time $\Order(m^k / s(\!\sqrt[k+1] m))$ for any $\OptSP_k$ formula $\psi$ not containing hyperpredicates.
\end{lemma}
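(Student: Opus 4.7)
My plan is to case-analyze over the truth pattern of all cross-predicate atoms in $\phi$, combining this with a degree-threshold argument in the style of \autoref{lem:positive-cross-edge}. Let $E_1(x_{i_1}, x_{j_1}), \dots, E_q(x_{i_q}, x_{j_q})$ be the (constantly many) cross-predicate atoms. Expanding $\phi$ in DNF with respect to these atoms yields $\phi = \bigvee_{\sigma \in \{0,1\}^q} \phi_\sigma \wedge \bigwedge_e E_e^{\sigma(e)}(x_{i_e}, x_{j_e})$, where each $\phi_\sigma$ is cross-predicate-free. Since every tuple $(x_1, \dots, x_k)$ satisfies exactly one $\sigma$, we may $\opt$ separately over the $2^{O(1)}$ sub-problems indexed by $\sigma$ and combine the answers.

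For every $\sigma$ with at least one positive coordinate, the corresponding sub-problem contains a positive cross-edge atom and is handled exactly by \autoref{lem:positive-cross-edge} in $O(m^{k-1/2})$, which is well within the target budget. The hard sub-problem is the all-negative assignment $\sigma = 0^q$, where we must compute
\[
    \opt_{x_1, \dots, x_k}\, \#_y \phi_0(x_1, \dots, x_k, y) \quad \text{subject to} \quad \bigwedge_{e} \neg E_e(x_{i_e}, x_{j_e}),
\]
with $\phi_0$ cross-predicate-free.

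For this all-negative sub-problem I would apply a degree split at threshold $\theta := m^{1/(k+1)}$. Call an object \emph{heavy} if its degree in some cross relation exceeds $\theta$, and \emph{light} otherwise; the sparsity bound gives at most $O(m/\theta) = O(m^{k/(k+1)})$ heavy objects. If some variable $x_i$ in the optimizing tuple takes a heavy value, brute-force over that assignment; once $x_i$ is fixed, every constraint $\neg E_e(x_i, \cdot)$ collapses into a unary predicate on another variable, strictly reducing the number of cross predicates, and we recurse on the resulting $\OptSP_{k-1}$ instance. The recursion depth is bounded by $k = O(1)$, and each brute-force layer costs $O(m^{k/(k+1)})$ on top of an $\OptSP_{k-1}$-call, comfortably fitting within the target. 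In the remaining all-light branch, every variable has cross-degree at most $\theta$ in every cross relation, so the set of tuples violating some $\neg E_e$ is structurally sparse. I would run the assumed cross-predicate-free algorithm on $\phi_0$ (at cost $O(m^k / s(m))$) and then patch around violations using only $\text{poly}(\theta)$ additional work per relation, exploiting the low cross-degree to localize the effect of each $\neg E_e$ constraint.

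The main obstacle is this final all-light step: we must extract a constraint-respecting optimum from the cross-predicate-free algorithm without an iteration blow-up. The threshold $\theta = m^{1/(k+1)}$ is chosen precisely so that the brute-force cost $O((m/\theta)\cdot m^{k-1})$ of the heavy branch matches the allowed overhead in the light branch, so that both combine to $O(m^k / s(\sqrt[k+1] m))$. The delicate part of the argument is showing that the light-branch patching uses only a bounded polynomial in $\theta$ extra work---i.e., that the $\neg E_e$-constraints can be absorbed either into an augmented cross-predicate-free input or into a short enumeration over candidate near-optima---so that the savings $s(m)$ available at the top level translate cleanly into the savings $s(\sqrt[k+1] m)$ advertised in the lemma.
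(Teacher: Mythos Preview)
Your overall decomposition matches the paper's: split off the cases where some cross atom is positively satisfied and handle those exactly via \autoref{lem:positive-cross-edge}, leaving the ``all-negative'' subproblem; then threshold at $\theta = m^{1/(k+1)}$ and brute-force heavy vertices. That part is fine.

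The gap is in your all-light branch. You propose to run the cross-free oracle once on $\phi_0$ at cost $\Order(m^k/s(m))$ and then ``patch around violations using only $\poly(\theta)$ additional work,'' but this is not a plan---it is the statement of what remains to prove. The oracle is a black box returning a single approximate value (or witness); if that witness happens to violate some $\neg E_e$, you have no mechanism to retrieve the next candidate. And the set of violating tuples is not small: even when every vertex has cross-degree $\leq \theta$, there are still up to $m$ cross edges, each extendable in $n^{k-2}$ ways, so there can be $\Theta(m n^{k-2})$ ``false positives''---tuples whose value under $\phi_0$ differs from their value under the all-negative constraint. This is $\Theta(m^{k-1})$, not $\poly(\theta)$, so a direct enumeration of violations is hopeless.

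The paper supplies exactly the missing idea here. After removing heavy vertices, it \emph{partitions} each $X_i$ into $g = \Order(m/\theta)$ groups of total degree $\Order(\theta)$, and runs the cross-free oracle on each of the $g^k$ group-combinations (total time $\Order(m^k/s(\theta))$, which is where the $s(\sqrt[k+1]{m})$ arises). It then keeps only the $\binom{k}{2} m n^{k-2} + 1$ top-scoring combinations and re-solves those \emph{exactly with the negative constraints} using the baseline in time $\Order(\theta^k)$ each, for total time $\Order(m^{k-1}\theta^k) = \Order(m^k/\theta)$. The point is a pigeonhole: since there are at most $\binom{k}{2} m n^{k-2}$ false positives globally, among the top $\binom{k}{2} m n^{k-2}+1$ combinations at least one contains a non-false-positive tuple of value $\geq \OPT_0/c$. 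This grouping-plus-reranking step is the substantive content of the lemma; your proposal identifies the right threshold but does not supply it.
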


\begin{proof}
Let $\psi = \max_{x_1, \dots, x_k} \counting_y \phi(x_1, \ldots, x_k, y)$ and let $E_1, \dots, E_r$ denote the cross predicates in the given instance. We define
\begin{equation*}
    \psi_0 := \max_{x_1, \dots, x_k} \counting_y \Bigg(\Bigg(\bigwedge_{\ell, i, j} \overline E_\ell(x_i, x_j)\Bigg) \land \phi_0(x_1, \dots, x_k, y) \Bigg)
\end{equation*}
and
\begin{equation*}
    \psi_{\ell, i, j} := \max_{x_1, \dots, x_k} \counting_y \Big( E_\ell(x_i, x_j) \land \phi(x_1, \dots, x_k, y) \Big),
\end{equation*}
where $\ell \in [r]$ and $i \neq j \in [k]$ and $\phi_0$ is the propositional formula obtained from $\phi$ by substituting all predicates $E_\ell(x_i, x_j)$ by $\false$. It is easy to verify that
\begin{equation*}
    \OPT = \max\{ \OPT_0, \max_{\ell, i, j} \OPT_{\ell, i, j} \},
\end{equation*}
where $\OPT_0$ and $\OPT_{\ell, i, j}$ are the optimal values of $\Max(\psi_0)$ and $\Max(\psi_{\ell, i, j})$, respectively. Using~\autoref{lem:positive-cross-edge}, we can compute $\OPT_{\ell, i, j}$ exactly in time $\Order(m^{k-1/2})$ for all $\ell, i, j$. It remains to efficiently solve $\Max(\psi_0)$ to compute $\OPT_0$.

As described before, we can always assume that each variable ranges over a separate set: $x_i \in X_i$, $y \in Y$. We call a vertex $x_i$ \emph{heavy} if it has degree at least $\!\sqrt[k+1]m$, and \emph{light} otherwise. The first step is to eliminate all heavy vertices; there can exist at most $\Order(m / \!\sqrt[k+1]m)$ many such vertices~$x_i$. Fixing $x_i$, we can solve the remaining problem in time $\Order(m^{k-1})$ using the baseline algorithm. We keep track of the optimal solution detected in this way. This precomputation step takes time $\Order(m^k / \!\sqrt[k+1]m)$ and afterwards we can safely remove all heavy vertices.

Next, partition each set $X_i$ into several groups $X_{i,1}, \ldots, X_{i,g}$ such that the total degree of all vertices in a group is $\Order(\!\sqrt[k+1]m)$, and the number of groups is $g = \Order(m / \!\sqrt[k+1]m)$. This is implemented by greedily inserting vertices into $X_{i,j}$ until its total degree exceeds $\!\sqrt[k+1]m$. As each vertex inserted in that way is light, we can overshoot by at most $\!\sqrt[k+1]m$.

Let $\psi_1 := \max_{x_1, \dots, x_k} \counting_y \phi_0(x_1, \dots, x_k, y)$; note that $\psi_1$ equals $\psi_0$ except that it disregards the cross predicates. Therefore, by assumption we can $c$-approximate $\Max(\psi_1)$ in time $\Order(m^k / s(m))$. The algorithm continues as follows:
\begin{enumerate}
\item For all combinations $(j_1, \ldots, j_k) \in [g]^k$, compute a $c$-approximation of $\Max(\psi_1)$ on the input $X_{1, j_1}, \ldots, X_{k, j_k}$. We keep track of the $\binom k2 m n^{k-2} + 1$ combinations with largest values (breaking ties arbitrarily).
\item For any of the top-most $\binom k2 m n^{k-2} + 1$ combinations $(j_1, \ldots, j_k)$, solve $\Max(\psi_0)$ exactly on $X_{1, j_1}, \ldots, X_{k, j_k}$ using the baseline algorithm. Return the best solution detected in this step or the precomputation phase.
\end{enumerate}
We begin with the correctness of the algorithm. First, the value of any solution $(x_1, \ldots, x_k)$ in $\Max(\psi_0)$ is at least as large as its value in $\Max(\psi_1)$. In particular, the optimal solution of $\Max(\psi_0)$ has value at least $\OPT_0$ in $\Max(\psi_1)$. We next establish an upper bound on the number \emph{false positives}, that is, tuples $(x_1,\dots,x_k)$ of different value in $\Max(\psi_0)$ than in $\Max(\psi_1)$. Observe that any such false positive contains at least one edge $(x_i, x_j)$ and since there are at most~$m$ edges, at most $\binom k2$ choices of $i, j$ and at most $n^{k-2}$ choices for the remaining vertices $x_\ell$, $\ell \neq i, j$, we can indeed bound the number of false positives by $\binom k2 m n^{k-2}$. Thus, if we witness the top-most $\binom k2 m n^{k-2} + 1$ solutions of $\Max(\psi_1)$ in step 1, among these there exists at least one solution of value $\geq \OPT_0 / c$ in $\Max(\psi_0)$.

Finally, let us bound the running time of the above algorithm. Recall that removing heavy vertices accounts for $\Order(m^k / \!\sqrt[k+1]m)$ time. In step 1, the $\Max(\psi_1)$ algorithm is applied $g^k$ times on instances of size $\Order(\!\sqrt[k+1]m)$, which takes time $\Order((m / \!\sqrt[k+1]m)^k \cdot (\!\sqrt[k+1]m)^k / s(\!\sqrt[k+1]m)) = \Order(m^k / s(\!\sqrt[k+1]m))$. Step 2 runs the baseline algorithm $m n^{k-2} = \Order(m^{k-1})$ times on instances of size $\Order(\!\sqrt[k+1]m)$, which takes time $\Order(m^{k-1} (\!\sqrt[k+1]m)^k) = \Order(m^k / \!\sqrt[k+1]m)$. Thus, the total running time is $\Order(m^k / \!\sqrt[k+1]m + m^k / s(\!\sqrt[k+1]m))$. As $s(m) \leq m$, this is as claimed. The proof for the maximization variant is complete and there are only minor adaptions necessary for minimization.
\end{proof}

\paragraph*{Step 3: Removing Parallel Edges}
After applying the previous steps we can assume that $\psi$ is an $\OptSP_k$ formula not containing hyperedges or cross edges. Let $E_1, \dots, E_r$ be the binary relations featured in $\psi$. We say that $\psi$ does not have \emph{parallel edges} if $r = 1$. In an instance of $\Opt(\psi)$ with parallel edges, any pair of vertices $(x_i, y)$ may be connected by up to $r$ parallel edges, or equivalently by an edge of $2^r$ possible \emph{colors}. We adopt the second viewpoint for this step: Let $\chi(x_i, y) := (E_1(x_i, y), \dots, E_r(x_i, y)) \in \{0, 1\}^r$ be the \emph{color} of the edge $(x_i, y)$ and let $\chi(x_1, \dots, x_k, y) := (\chi(x_1, y), \dots, \chi(x_k, y)) \in (\{0, 1\}^r)^k$ be the color of the tuple $(x_1, \dots, x_k, y)$.

\begin{lemma} \label{lem:colored-edges}
Suppose that, for any $\OptSP_k$ formula $\psi$ not containing hyperedges, cross edges and parallel edges, $\Opt(\psi)$ can be $c$-approximated in time $\Order(m^k / s(m))$. Then $\Opt(\psi)$ can be $c$-approximated in time $\Order(m^k / s(m))$ for any $\OptSP_k$ formula $\psi$ not containing hyperedges and cross edges.
\end{lemma}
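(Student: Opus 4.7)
My plan is to rewrite $\psi$ as an equivalent $\OptSP_k$ formula $\psi'$ with only a single binary relation, via vertex duplication indexed by the color pattern of $(x_1, \dots, x_k, y)$. For each $c = (c_1, \dots, c_k) \in C := (\{0,1\}^r)^k$, let $\phi_c \in \{0,1\}$ denote the Boolean obtained by substituting $E_j(x_i, y) := c_i[j]$ in $\phi$; since each $(x_1, \dots, x_k, y)$ realizes a unique color pattern $c^*(y) := (\chi(x_i, y))_i$, we have $\phi(x_1, \dots, x_k, y) = \sum_{c : \phi_c = 1} \prod_i [\chi(x_i, y) = c_i]$. The natural reduction creates $Y' := Y \times C$, a unary predicate $U((y, c)) := \phi_c$, and a single binary relation $E'(x_i, (y, c)) := [\chi(x_i, y) = c_i]$; then $\psi' := \opt \counting_{y'} U(y') \wedge \bigwedge_i E'(x_i, y')$ has only one binary relation, and for each fixed $(x_1, \dots, x_k)$ its count equals $\counting_y \phi$, since the unique $y' = (y, c^*(y))$ matches $\bigwedge_i E'(x_i, y')$ and contributes iff $\phi_{c^*(y)} = 1$.

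The main obstacle is sparsity. A naive encoding of $E'$ requires an edge $(x_i, (y, c))$ for every pair $(x_i, y)$ including non-neighbors (those with $\chi(x_i, y) = 0^r$), which can inflate the sparsity from $m$ to $\Omega(n|Y|)$. My plan to overcome this is to restrict $Y'$ to patterns $c$ with all $c_i \neq 0^r$, keeping $|E'| = O(m)$ since each edge then corresponds to a genuine sparse input entry. To compensate for the excluded patterns (those with some $c_i = 0^r$), I would decompose $\counting_y \phi$ by the active set $T := \{i : c_i \neq 0^r\}$ and expand each no-edge constraint $[\chi(x_i, y) = 0^r] = 1 - \sum_{c'_i \neq 0^r}[\chi(x_i, y) = c'_i]$ via inclusion-exclusion. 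This produces a signed linear combination of $O(1)$ sub-counts, each of the form $\counting_y \prod_{i \in T'}[\chi(x_i, y) = c''_i]$ with all $c''_i \neq 0^r$ and $T' \subseteq [k]$, involving only positive color constraints that can be encoded sparsely.

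The delicate final step, and the main obstacle, is assembling these signed sub-counts into a single $\OptSP_k$ formula under the outer $\opt$, since a ``max of signed sum'' is not directly representable as an $\OptSP_k$ instance. My plan is to stratify sub-counts by $|T'|$: those with $|T'| = k$ assemble directly into $\psi'$ by disjoint tagged vertex sets in $Y'$ equipped with appropriate unary predicates, each contributing positively. Sub-counts with $|T'| < k$ depend on a proper subset of $(x_1, \dots, x_k)$ and can be precomputed in time $O(m^{k-1})$ via the baseline algorithm, then folded back into the single formula either as unary predicates on the active $x$-variables or as additional tagged contributions to $Y'$, with signs chosen so that the construction preserves the exact identity $\counting_{y'}(\cdots) = \counting_y \phi$ for every $(x_1, \dots, x_k)$. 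Applying the assumed $c$-approximation algorithm for parallel-edge-free formulas to $\psi'$ then yields the claimed $O(m^k / s(m))$-time $c$-approximation for $\Opt(\psi)$; the hardest technical hurdle will be verifying that the sign bookkeeping in the assembly step produces a correct and sparse $\OptSP_k$ formula whose optimum matches that of $\psi$ on the nose.
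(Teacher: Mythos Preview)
Your identification of the sparsity obstacle is correct, and your initial setup (duplicating $Y$ by color patterns $\alpha \in (\{0,1\}^r)^k$, with unary tags) matches the paper's. However, your proposed fix via inclusion--exclusion has a genuine gap that you yourself flag: assembling a \emph{signed} linear combination of sub-counts into a single $\OptSP_k$ counting formula. An $\OptSP_k$ formula counts a set and so only represents nonnegative integers; there is no direct subtraction mechanism. Your plan to ``fold back'' the precomputed lower-order terms (those with $|T'| < k$) as unary predicates or tagged $Y'$-vertices does not explain how to realize the negative coefficients, and even for the positive ones, encoding an instance-dependent value $V_{T'}(x_{T'})$ by extra dummy $y'$-vertices can blow the sparsity far beyond $O(m)$ (you would need up to $|Y|$ dummies per assignment of $x_{T'}$). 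The sign bookkeeping is not a mere technicality here; it is the crux, and as written the construction is incomplete.

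The paper sidesteps inclusion--exclusion entirely with a direct combinatorial trick. It keeps \emph{all} copies $y_\alpha$, $\alpha \in (\{0,1\}^r)^k$, and defines the single relation $E$ as follows: whenever $\chi(x_i, y) \neq 0$, add an edge $E(x_i, y_\beta)$ for every $\beta$ with $\beta_i = \chi(x_i, y)$, \emph{and also} an edge $E(x_i, y_\gamma)$ for every $\gamma$ with $\gamma_i = 0$. The new formula then requires, for each disjunct $\alpha$, the tag $C_\alpha(y)$ together with the constraint $\bigwedge_i \bigl(E(x_i, y) \Leftrightarrow \alpha_i \neq 0\bigr)$ and the substituted $\phi_\alpha$. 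The ``garbage'' edges to the $\gamma_i = 0$ copies are the key: they make the biconditional \emph{fail} when $\alpha_i = 0$ but $\chi(x_i, y) \neq 0$, while if $\chi(x_i, y) = 0$ no edges are added at all and the biconditional forces $\alpha_i = 0$. Hence for each $(x_1,\dots,x_k,y)$ exactly the copy $y_\alpha$ with $\alpha = \chi(x_1,\dots,x_k,y)$ survives, contributing iff $\phi_\alpha$ holds. Sparsity is preserved because each original nonzero-color edge spawns only $2 \cdot 2^{r(k-1)} = O(1)$ new edges, and no signed decomposition is ever needed.
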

\begin{proof}
Let $E_1, \dots, E_r$ denote the binary relations featured in the given instance; our goal is to construct a new instance with only a single edge predicate $E$. We leave the vertex sets~$X_i$ unchanged and construct $Y' = \{ y_\alpha : y \in Y, \alpha \in (\{0, 1\}^r)^k \}$, i.e., each vertex $y \in Y$ is copied $2^{rk} = \Order(1)$ times and each copy $y_\alpha$ is indexed by a $k$-tuple of colors $\alpha = (\alpha_1, \dots, \alpha_k) \in (\{0, 1\}^r)^k$. For every $\alpha$ we also introduce a new unary predicate~$C_\alpha$ and assign $C_\alpha(y_{\alpha'})$ if and only if $\alpha = \alpha'$.

Now let $i \in [k]$ and let $x_i \in X_i$ and $y \in Y$ be arbitrary vertices in the original instance. We assign the edges in the constructed instance as follows. If $\chi(x_i, y) = 0 = (0, \dots, 0)$, then~$x_i$ and~$y$ are not connected and we do not introduce new edges. So suppose that $\chi(x_i, y) \neq 0$. Then we add $2 \mult 2^{r(k-1)}$ edges
\begin{itemize}
\item $E(x_i, y_\beta)$, for all $\beta \in (\{0, 1\}^r)^k$ with $\beta_i = \chi(x_i, y)$, and
\item $E(x_i, y_\gamma)$, for all $\gamma \in (\{0, 1\}^r)^k$ with $\gamma_i = 0$.
\end{itemize}
Clearly the sparsity of the new instance is bounded by $2 \mult 2^{r(k-1)} m = \Order(m)$ plus the contribution of the new unary predicates which is also $\Order(m)$.

Now let $\psi = \opt_{x_1, \dots, x_k} \counting_y \phi(x_1, \dots, x_k)$. To define an equivalent $\MaxSP_k$ formula $\psi'$, for any $\alpha \in (\{0, 1\}^r)^k$ let $\phi_\alpha$ denote the formula obtained from $\phi$ by substituting $E_j(x_i, y)$ by $\true$ if $\alpha_{i, j} = 1$ and by $\false$ otherwise. We define $\psi' = \max_{x_1, \dots, x_k} \counting_y \phi'(x_1, \dots, x_k, y)$,
where $\phi'(x_1, \dots, x_k, y)$ is
\begin{equation*}
    \bigvee_{\alpha \in (\{0, 1\}^r)^k} \Bigg( \underbrace{C_\alpha(y)}_{\text{(i)}} \land \underbrace{\Bigg(\bigwedge_{i \in [k]} (E(x_i, y) \iff \alpha_i \neq 0) \Bigg)}_{\text{(ii)}} \land \underbrace{\phi_\alpha(x_1, \dots, x_k, y)}_{\text{(iii)}} \Bigg)\!.
\end{equation*}
As desired, the constructed instance contains only a single binary predicate and no cross or hyperedges. It remains to argue that the value of any tuple $(x_1, \dots, x_k)$ is not changed by the reduction. Indeed, for all $y \in Y$ we prove the following two conditions and thereby the claim.
\begin{itemize}
\item $\phi'(x_1, \dots, x_k, y_\alpha) = \phi(x_1, \dots, x_k, y)$ for $\alpha = \chi(x_1, \dots, x_k, y)$,
\item $\phi'(x_1, \dots, x_k, y_\alpha) = \false$ for all $\alpha \neq \chi(x_1, \dots, x_k, y)$.
\end{itemize}
The first bullet is simple to verify: In the evaluation of $\phi'(x_1, \dots, x_k, y_\alpha)$ we only have to focus on the $\alpha$-disjunct by the constraint (i). The constraint (ii) is satisfied by our construction of $E$ and therefore only (iii) is decisive: $\phi'(x_1, \dots, x_k, y_\alpha) = \phi_\alpha(x_1, \dots, x_k, y_\alpha) = \phi(x_1, \dots, x_k, y)$. Next, focus on the second bullet. For $\alpha \neq \chi(x_1, \dots, x_k, y)$ there exists some index $i$ such that $\alpha_i \neq \chi(x_i, y)$. By (i), we again only need to consider the $\alpha$-disjunct. We now prove that $E(x_i, y) \iff \alpha_i = 0$ which falsifies (ii) and shows $\phi'(x_1, \dots, x_k, y_\alpha) = \false$. On the one hand, if $\alpha_i \neq 0$ then there is no edge $E(x_i, y_\alpha)$, since $0 \neq \alpha_i \neq \chi(x_i, y)$. On the other hand, if $\alpha_i = 0$ then we added an edge $E(x_i, y_\alpha)$.
\end{proof}

\paragraph*{Step 4: Removing Unary Predicates}
As the final simplification, we eliminate unary predicates and show that the resulting problem can be reduced to the Hybrid Problem.

\begin{proof}[Proof of \autoref{lem:optsp-to-hybrid}]
By applying the reductions in Lemmas~\ref{lem:hyperedges},~\ref{lem:cross-edges} and~\ref{lem:colored-edges}, it suffices to show that any $\OptSP_k$ property $\psi$ not containing hyperpredicates, cross edge predicates and parallel edge predicates can be reduced to the Hybrid Problem. The shape of $\psi$ is significantly restricted and contains only the following three types of relations: Unary predicates on $X_1, \dots, X_k$, unary predicates on $Y$ and  binary predicates of the form $E(x_i, y)$ for $i \in [k]$.

We can assume that there are no unary predicates on $X_1, \dots, X_k$ as follows: By enumerating all possible assignments of these unary predicates, and by restricting the sets $X_1, \dots, X_k$ to those vertices matching the current assignment, we create a constant number of instances each without unary predicates on $X_1, \dots, X_k$.

This leaves only unary predicates on $Y$ and the edge predicates $E(x_i, y)$. Let $\psi = \opt_{x_1, \dots, x_k} \counting_y \phi(x_1, \dots, x_k, y)$. Another way to view this problem is associate a Boolean function $\phi_y : \{0, 1\}^k \to \{0, 1\}$ to every vertex $y \in Y$, which takes as input $E(x_i, y)$ and does no longer depend on the unary predicates of $y$. In that way, we can rewrite the objective as
\begin{equation*}
    \opt_{x_1, \dots, x_k} \counting_y \phi_y(E(x_1, y), \dots, E(x_k, y)).
\end{equation*}
Our goal is now to reinterpret this problem as an instance of the Hybrid Problem. As the universe, we assign
\begin{equation*}
    U = \Big\{ (y, \tau) : y \in Y, \text{$\tau \in \{0, 1\}^k$ is a satisfying assignment of $\phi_y$}\Big\},
\end{equation*}
along with the partition $U = \bigcup_{\tau \in \{0, 1\}^k} U_\tau$, $U_\tau = U \cap (Y \times \{\tau\})$. For every vertex $x_i \in X_i$, we construct a set $S_i \in \mathcal S_i$ as $S_i = \{ (y, \tau) : E(x_i, y) \} \cap U$. It is easy to check that the value of every solution is preserved in this way: $\Val(S_1, \dots, S_k) = \Val(x_1, \dots, x_k)$. The overhead of this rewriting step is $\Order(m)$ and thus negligible in the running time bound. 
\end{proof}

\bibliographystyle{plainurl}
\bibliography{refs}

\appendix
\section{Baseline Algorithm} \label{sec:baseline}
There is a simple baseline algorithm solving any $\OptSP_{k,\ell}$ problem $\Opt(\psi)$ in time $\Order(m^{k+\ell-1})$. It is a straightforward adaption of the baseline algorithm for the decision setting in~\cite{GaoIKW18} to the optimization setting.

\begin{theorem}
Let $k \geq 0$, $\ell \geq 1$ be parameters with $k + \ell \geq 2$ and let $\psi$ be an $\OptSP_{k,\ell}$ formula. Then $\Opt(\psi)$ can be solved exactly in time $\Order(m^{k+\ell-1})$. In fact, we can compute the values $\Val(x_1, \dots, x_k)$ for all tuples $(x_1, \dots, x_k)$ in the same running time.
\end{theorem}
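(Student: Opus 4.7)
The plan is to extend the $\Order(m^{k-1})$ model-checking baseline of~\cite{GaoIKW18} to the counting setting. Once all values $\Val(\bar x)$ are known, the optimum follows from a final $\Order(n^k) \le \Order(m^{k+\ell-1})$ pass (using $\ell \geq 1$), so it will suffice to compute $\Val(\bar x) = \#\{\bar y : \phi(\bar x,\bar y)\}$ for every $\bar x = (x_1,\dots,x_k)$ within the target time $\Order(m^{k+\ell-1})$.

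First, I would rewrite the fixed formula $\phi$ in DNF with $\Order(1)$ disjuncts and apply inclusion--exclusion to reduce the task to computing, for each of $\Order(1)$ conjunctions $C$ of literals, the quantity $N_C(\bar x) = \#\{\bar y : C(\bar x,\bar y)\}$ for all $\bar x$ within the same time bound. A second round of inclusion--exclusion on the negative atoms of $C$ then further reduces to the same problem for conjunctions $\bar P$ consisting only of positive atoms (assuming, by absorbing repetitions into fresh predicates, that the arguments of each atom are distinct variables).

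The heart of the argument is a two-case split on such a positive conjunction $\bar P$. In Case~A, $\bar P$ contains a positive atom $R(\bar z)$ of arity $\geq 2$ involving some $y$-variable. I would pivot on $R$ by iterating over its at most $m$ records (each fixing $|\bar z| \geq 2$ of the variables), then brute-force the remaining $k + \ell - |\bar z|$ variables in $\Order(n^{k+\ell-|\bar z|}) \le \Order(m^{k+\ell-2})$ time; for each resulting full tuple, an $\Order(1)$ check of the remaining atoms determines whether to increment $N_{\bar P}[\bar x]$. The total cost is $\Order(m \mult m^{k+\ell-2}) = \Order(m^{k+\ell-1})$. In Case~B, every positive atom touching $\bar y$ is unary, and so the $\bar y$-count decouples into a product
\begin{equation*}
  N_{\bar P}(\bar x) \;=\; I(\bar x) \mult \prod_{j=1}^{\ell} |B_j|,
\end{equation*}
where $I(\bar x)$ is the indicator of the positive atoms on $\bar x$ only, and $B_j$ is the set of $y_j$ satisfying the positive unary atoms on $y_j$ (hence independent of $\bar x$). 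After $\Order(m)$ preprocessing, each $N_{\bar P}(\bar x)$ is evaluated in $\Order(1)$, for a total sweep of $\Order(n^k) \le \Order(m^{k+\ell-1})$ using $\ell \geq 1$.

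The main obstacle I expect is ensuring the bookkeeping stays under control: each round of inclusion--exclusion produces only $2^{\Order(1)}$ subcases, every subcase fits one of the two patterns above, and the hidden constants depend only on the fixed formula $\phi$. Boundary values of $k$ and $\ell$ (such as $k=0$ or $k+\ell=2$) need a brief sanity check, but in all valid cases the inequality $n^k \le m^{k+\ell-1}$ used in Case~B holds because $n \le m$ and $\ell \geq 1$.
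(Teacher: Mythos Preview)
Your proposal is correct, but it takes a different route from the paper's proof. The paper argues by induction on $k+\ell$: the base case $k+\ell=2$ computes $\psi(x)=\counting_y\phi(x,y)$ for all~$x$ in time $\Order(m)$ via a ``color'' argument (group vertices and edges by the truth pattern of their unary/binary predicates and count pairs $(x,y)$ with nonzero edge color directly, handling the all-zero color by complement); the inductive step simply brute-forces one variable and recurses on a structure of size $\leq m$. Your approach is instead a direct, non-inductive argument: expand $\phi$ into DNF, apply two rounds of inclusion--exclusion to reduce to positive conjunctions, and then do a single structural case split (pivot on a nontrivial $y$-touching atom, or factor when all $y$-constraints are unary). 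Both arguments are valid and yield the same bound. The paper's induction is structurally cleaner and isolates the one nontrivial idea in the $k+\ell=2$ base case, whereas your approach trades the induction for explicit combinatorial bookkeeping; on the other hand, your Case~A pivot is a slightly sharper observation than the paper's step (it fixes $\geq 2$ variables at once rather than just one), and your decomposition makes the role of the $y$-variables more transparent.
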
 

\begin{proof}
The proof is by induction on $k + \ell$. For $k + \ell = 2$, we obtain a linear-time improvement, and for larger $k + \ell$ we apply a downward self-reduction. We start with the improvement for $k + \ell = 2$. If $\psi \in \OptSP_{k,\ell}$, then either $\psi = \max_x \counting_y \phi(x, y)$ or $\psi = \counting_x \counting_y \phi(x, y)$. We treat both cases in a unified by viewing $x$ as a free variable and setting $\psi(x) = \counting_y \phi(x, y)$; we show that we can compute all values $\psi(x)$ for $x \in X$ in time $\Order(m)$. It follows that $\Opt(\psi)$ can be solved in time $\Order(m)$.

There are only three types of predicates: Unary predicates on $X$, unary predicates on~$Y$ and binary predicates on $X \times Y$. In each case, we can alternatively think of a single multi-colored predicate. More specifically, suppose there are $r$ unary relations $R_1, \ldots, R_r$ on~$X$. Then each object $x \in X$ is characterized by a \emph{color $\chi(x) \in \{0, 1\}^r$}, where $\chi(x)_i = 1$ if and only if $R_i(x)$ is true. We similarly define colors $\chi(y)$ and $\chi(x, y)$ for predicates on $Y$ and $X \times Y$. Moreover, we define $Y_\alpha = \{ y \in Y : \chi(y) = \alpha \}$. With this notation, we can rewrite
\begin{equation*}
  \psi(x) = \counting_{y \in Y} \phi(\chi(x), \chi(y), \chi(x, y)) = \sum_{\alpha, \beta} \phi(\chi(x), \alpha, \beta) \mult \#\{ y \in Y_\alpha : \chi(x, y) = \beta \},
\end{equation*}
where we abuse notation and interpret $\phi$ as a function which accepts colors as inputs. Let $C(x; \alpha, \beta) := \#\{ y \in Y_\alpha : \chi(x, y) = \beta \}$. Suppose we have precomputed $C(x; \alpha, \beta)$ for all colors~$\alpha, \beta$ and for all $x \in X$. Then we can compute $\psi(x)$ for all $x \in X$ in total time $\Order(|X|) = \Order(m)$ by explicitly evaluating the above expression.

It remains to show that we can precompute all values $C(x; \alpha, \beta)$ in time $\Order(m)$. There are two cases: On the one hand, if $\beta \neq 0 = (0, \dots, 0)$, then we only have to consider pairs $(x, y)$ which occur positively in at least one binary predicate. Thus, there can be at most $\Order(m)$ such pairs and a single pass through the structure suffices to compute $C(x; \alpha, \beta)$. On the other hand, if $\beta = 0$, we compute $C(x; \alpha, 0) = |Y_\alpha| - \sum_{\beta \neq 0} C(x; \alpha, \beta)$.

Next, focus on the general case $k + \ell > 2$. We evaluate
\begin{equation*}
  \psi(x_1, \dots, x_k, y_1, \dots, y_{\ell-1}) = \counting_{y_l} \phi(x_1, \dots, x_k, y_1, \dots, y_{\ell}),
\end{equation*}
for all $x_1, \dots, x_k, y_1, \dots, y_{\ell-1}$. Our goal is to brute-force over all objects $x_1$ (or analogously all objects $y_1$ in case that $k = 0$). Fixing such an object $x_1$, we are left to optimize
\begin{equation*}
  \psi'(x_2, \dots, x_k, y_1, \dots, y_{\ell-1}) = \counting_{y_{\ell}} \phi'(x_2, \dots, x_k, y_1, \dots, y_{\ell}),
\end{equation*}
for some propositional formula $\phi'$. We describe how to obtain $\phi'$ from $\phi$. Any predicate not including~$x_1$ remains untouched. Any predicate $R(x_1, x_{i_2}, \dots, x_{i_a})$ is replaced as follows: If~$a = 1$ (that is, $R$ is unary), then $R(x_1)$ is replaced by the appropriate constant depending on whether $R(x_1)$ holds for the fixed object $x_1$. If $a > 1$, then we replace $R(x_1, x_{i_2}, \dots, x_{i_a})$ by $R'(x_{i_2}, \dots, x_{i_a})$ for a new relation $R'$ of arity $a - 1$. In the reduced instance, every $R$-record $(x_1, x_{i_2}, \dots, x_{i_a})$ is replaced by an $R'$-record $(x_{i_2}, \dots, x_{i_a})$ and all original entries not containing the fixed object $x_1$ are discarded. Notice that the total number of records can only decrease in this step. It is easy to verify that the above steps are correct, and that the objective does not change.

We inductively assume that $\psi'$ can be evaluated exactly in time $\Order(m^{k+\ell-2})$. There are $n$ objects $x_1 \in X_1$, and constructing the associated instances $\psi'$ takes time $\Order(m)$. The total running time is $\Order(n m^{k+\ell-2}) = \Order(m^{k+\ell-1})$.
\end{proof}
\section{Improved Algorithms for Two or More Counting Quantifiers}\label{sec:improved-alg-multiple-counting}

\begin{theorem} \label{thm:multiple-counting}
Let $k \geq 0$, $\ell \geq 2$ be parameters with $k + \ell \geq 3$ and let $\psi$ be an $\OptSP_{k,\ell}$ formula. Then $\Opt(\psi)$ can be exactly solved in time $\Order(m^{k+\ell-3/2})$.
\end{theorem}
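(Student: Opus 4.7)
The plan is to reduce the general problem to a three-variable core via brute force, and to solve the core in time $\Order(m^{3/2})$. Specifically, I would iterate over all assignments to $k + \ell - 3$ of the variables -- keeping two counting variables and, if $k \geq 1$, one maximization variable in the core -- and for each assignment, substitute the fixed values into $\phi$ using the same downward self-reduction as in the baseline algorithm (\autoref{sec:baseline}) to obtain a three-variable subformula $\phi'$. This outer loop takes time $\Order(m^{k + \ell - 3})$, so it suffices to solve each three-variable subproblem in time $\Order(m^{3/2})$, yielding the overall bound $\Order(m^{k + \ell - 3/2})$.

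In the core we must compute either $\max_{x_1} \counting_{y_1, y_2}\, \phi'(x_1, y_1, y_2)$ or $\counting_{y_1, y_2, y_3}\, \phi'(y_1, y_2, y_3)$ (and analogously for minimization). Here $\phi'$ is a Boolean function of the $r = \Order(1)$ atomic predicates (unary, binary, or ternary) over the three remaining variables. The key tool, replacing Gao et al.'s nine-case analysis by ``a simple basis'' as promised in the technical overview, is an inclusion-exclusion / Möbius expansion of $\phi'$ viewed as a polynomial over its atomic inputs: using $\mathbf{1}[p_i = a_i] = a_i p_i + (1 - a_i)(1 - p_i)$ and distributing, every Boolean function $f : \{0, 1\}^r \to \{0, 1\}$ admits a representation $f = \sum_{S \subseteq [r]} c_S \prod_{i \in S} p_i$ with integer coefficients $c_S$ of absolute value $\Order(1)$. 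Thus the count of triples satisfying $\phi'$ reduces to $\Order(1)$ signed subproblems, each of the form ``count the triples in which a specified subset of atomic predicates \emph{all} hold''.

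Each such positive subproblem is a sparse colored triangle (possibly with hyperedge refinements) counting problem. A positive ternary predicate trivially bounds the count by $m$ and is handled in $\Order(m)$ time by a single sweep over its records. Positive unary predicates are absorbed by partitioning each variable's domain into the $\Order(1)$ color classes induced by its unary predicates. What remains is, for each value of the optimization variable (if present), counting pairs joined by a specified subset of the three binary edge relations; this is classical sparse triangle counting with per-vertex output, solvable in $\Order(m^{3/2})$ time by the standard high-degree / low-degree decomposition. Combining the $\Order(1)$ signed subcounts recovers the value $\Val(x_1)$ exactly for every $x_1$, and a final linear scan produces the maximum (or minimum).

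The main obstacle will be verifying that every combination of unary, binary, and ternary atomic predicates reduces cleanly via the Möbius expansion to $\Order(1)$ positive counting subproblems that each admit an $\Order(m^{3/2})$-time per-vertex algorithm, and that the per-$x_1$ accumulation can be performed inside the triangle-listing pass without breaking the running time bound. Once this is checked, the signed sums recover $\Val(x_1)$ exactly, so the optimum over the optimization variable can be read off in an additional $\Order(m)$ time, completing the proof.
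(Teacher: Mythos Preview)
Your proposal is correct and follows essentially the same approach as the paper: brute-force $k+\ell-3$ variables, then solve the three-variable core by expanding the Boolean function in the monotone conjunction basis $\{\prod_{i\in S} p_i : S\subseteq [r]\}$ and handling each conjunctive subproblem via per-vertex sparse triangle counting in $\Order(m^{3/2})$.

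The only organizational difference is the order of operations. The paper first normalizes the core---removing hyperpredicates by a difference argument, unary predicates by color-partitioning, and parallel edges by color-filtering---so that the basis expansion is applied to exactly three binary atoms $E(x,y),E(x,z),E(y,z)$ (\autoref{lem:triangle-counting}). You instead expand directly over all $r=\Order(1)$ atoms and handle the cleanup inside each conjunctive subproblem (ternary atom present $\Rightarrow$ sweep its $\leq m$ records; unary atoms $\Rightarrow$ restrict domains; multiple binary atoms on the same pair $\Rightarrow$ intersect them into one sparse relation). Both routes are valid and yield the same $\Order(m^{3/2})$ core and overall $\Order(m^{k+\ell-3/2})$ bound; the paper's version keeps the basis expansion to $2^3$ terms, while yours avoids the separate normalization lemmas at the cost of $2^r$ (still $\Order(1)$) terms.
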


As a first step towards proving~\autoref{thm:multiple-counting} we will brute-force the first $k+\ell-3$ quantifiers. What remains is a $3$-quantifier problem of the form $\max_x \counting_{y, z} \phi(x, y, z)$ or $\counting_x \counting_{y, z} \phi(x, y, z)$. Both cases can be dealt with in a unified way: Let $x$ be a free variable and define $\psi(x) = \counting_{y, z} \phi(x, y, z)$. In the following we show how to evaluate $\psi(x)$ for all vertices $x$ (i.e., compute a list of $n$ values $\psi(x)$) in time $\Order(m^{3/2})$. Given these values, we can solve the original $\OptSP_{k,\ell}$ problem in time $\Order(m^{k+\ell-3} m^{3/2}) = \Order(m^{k+\ell-3/2})$.

To deal with this reduced problem, we proceed in two steps: Similar to the main reduction, we first identify a subproblem which captures the core hardness and we show how to solve this subproblem in time $\Order(m^{3/2})$ (\autoref{lem:triangle-counting}). Afterwards, we show how to derive~\autoref{thm:multiple-counting} by dealing with hyperpredicates, unary predicates and parallel predicates. The analogous statement for the model-checking case is proved in~\cite[Section 9.2]{GaoIKW18}.

\begin{lemma} \label{lem:triangle-counting}
Let $\phi : \{0, 1\}^3 \to \{0, 1\}$ be arbitrary. Given a tripartite structure $(X, Y, Z, E)$, in time $\Order(m^{3/2})$ we can evaluate $\psi(x) = \counting_{y \in Y, z \in Z} \phi(E(x, y), E(x, z), E(y, z))$ for all $x \in X$.
\end{lemma}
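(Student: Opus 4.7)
The plan is to expand $\phi$ as an integer multilinear polynomial in its three binary arguments and handle each resulting monomial separately. Since $E(x,y), E(x,z), E(y,z) \in \{0,1\}$, there exist integer coefficients $c_S$ for $S \subseteq \{1,2,3\}$ with $\phi(a_1,a_2,a_3) = \sum_S c_S \prod_{i \in S} a_i$. Substituting and swapping summations turns $\psi(x)$ into a fixed integer combination of eight ``monomial counts''
\[
  T_S(x) \;=\; \sum_{y \in Y,\, z \in Z} \prod_{i \in S} e_i(x,y,z),
\]
where $(e_1,e_2,e_3) = (E(x,y), E(x,z), E(y,z))$. It then suffices to evaluate every $T_S(x)$ for every $x \in X$ within the target budget of $\widetilde\Order(m^{3/2})$, and combine them by $\psi(x) = \sum_S c_S T_S(x)$ in a final $\Order(n)$ pass.

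For seven of the eight subsets $S$ the computation is routine and costs $\Order(m)$ overall. The empty monomial contributes $|Y|\mult|Z|$ independently of $x$. The single-edge monomials reduce to $\deg_Y(x)\mult|Z|$, $|Y|\mult\deg_Z(x)$, or the fixed number of $Y$--$Z$ edges. The ``wedge at $x$'' monomial $\{E(x,y),E(x,z)\}$ equals $\deg_Y(x)\mult\deg_Z(x)$. The two ``two-path'' monomials $\{E(x,y),E(y,z)\}$ and $\{E(x,z),E(y,z)\}$ are computed by iterating once over the $X$--$Y$ (resp.\ $X$--$Z$) edges and accumulating $\deg_Z(y)$ (resp.\ $\deg_Y(z)$) into a counter at~$x$, after an $\Order(m)$-time precomputation of all degrees.

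The only expensive term is $T_{\{1,2,3\}}(x)$, which counts the triangles $(x,y,z) \in X \times Y \times Z$ with all three edges present. This is where I expect the main difficulty, and I would resolve it by the classical $\Order(m^{3/2})$ triangle enumeration: orient every edge from its lower-degree endpoint to its higher-degree endpoint (breaking ties by a fixed vertex order), so that every vertex has out-degree at most $\sqrt{2m}$; then, for every directed edge $u \to v$, enumerate the out-neighbors $w$ of~$v$ and test, using a hash set of edges, whether $\{u,w\}$ is also an edge. Each triangle is discovered exactly once; whenever the discovered triangle has the correct color pattern $(x,y,z) \in X \times Y \times Z$, we increment a counter at the unique $x$-vertex. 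The total work is $\sum_{e} \Order(\sqrt{m}) = \Order(m^{3/2})$.

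Summing the eight monomial counts weighted by $c_S$ for each $x \in X$ finally yields all values $\psi(x)$ within the claimed $\Order(m^{3/2})$ bound. One subtlety worth noting is that the coefficients $c_S$ may be negative, so we cannot afford to approximate any $T_S(x)$; all counts above are exact, so this is not an issue. The only nontrivial step is the triangle-counting subroutine, and modulo the standard high-/low-degree trick the remaining bookkeeping is direct.
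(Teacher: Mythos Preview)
Your proposal is correct and follows essentially the same approach as the paper: both expand $\phi$ in the basis $\{\prod_{i\in S} a_i\}_{S\subseteq[3]}$ (the paper phrases this as $\phi_S=\bigwedge_{i\in S}a_i$, which is the same thing over $\{0,1\}$), handle the seven non-full monomials in $\Order(m)$, and invoke a standard $\Order(m^{3/2})$ triangle-listing routine for $S=\{1,2,3\}$. The only cosmetic difference is the choice of triangle subroutine---the paper uses the heavy/light vertex split of Alon--Yuster--Zwick, while you use the degree-orientation variant---but both are classical and yield the same bound.
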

\begin{proof}
Let us start with the special case $\phi(a_1, a_2, a_3) = a_1 \land a_2 \land a_3$, i.e., the goal is to count the number of all triangles involving $x$, for every $x \in X$.  It is well-known how to solve this triangle counting problem in time $\Order(m^{3/2})$~\cite{AlonYZ97}; we present the algorithm here for completeness.

We call a vertex \emph{heavy} if its degree exceeds $\sqrt m$ and \emph{light} otherwise. In the first step we explicitly list all triangles involving light vertices: Enumerate all edges $(x, y) \in X \times Y$, and if~$y$ is light then further enumerate all edges $(y, z) \in Y \times Z$. For any such edge we can test in constant time whether the remaining edge $(x, z)$ is present. Afterwards, we can safely remove all light vertices in $Y$. This step takes time $\Order(m \sqrt m) = \Order(m^{3/2})$. By a analogous arguments we remove all light vertices in $X$ and $Z$. Since all remaining vertices are heavy, the graph now contains at most $\Order(m / \sqrt m) = \Order(\sqrt m)$ vertices and we can list all triangles $(x, y, z)$ in time $\Order((\sqrt m)^3) = \Order(m^{3/2})$. 

It turns out that we can reduce every function $\phi(a_1, a_2, a_3)$ to the previous case. For a set $S \subseteq [3]$, let $\phi_S(a_1, a_2, a_3) = \bigwedge_{i \in S} a_i$ and $\psi_S(x) = \counting_{y \in Y, z \in Z} \phi_S(E(x, y), E(x, z), E(y, z))$. We claim that we can compute $\psi_S(x)$ for all $x \in X$ and all $S \subseteq [3]$ in time $\Order(m^{3/2})$: In the previous paragraph we gave an algorithm to compute $\psi_{[3]}(x)$ in time $\Order(m^{3/2})$, and it is easy to see how to compute $\psi_S(x)$ in time $\Order(m)$ for all sets $S \subsetneq [3]$. We now exploit that $\{ \phi_S : S \subseteq [3] \}$ forms a basis of all Boolean functions, so we can be express
\begin{equation*}
    \psi(x) = \sum_{S \subseteq [3]} \alpha_S \psi_S(x),
\end{equation*}
for some (integer) coefficients $\alpha_S$. Therefore, having precomputed all values $\psi_S(x)$ in time $\Order(m^{3/2})$, we can compute $\psi(x)$ in time $\Order(|X|) = \Order(m)$.
\end{proof}

\begin{proof}[Proof of~\autoref{thm:multiple-counting}]
Let $\psi(x) = \counting_{y, z} \phi(x, y, z)$, where $x, y, z$ range over $X, Y, Z$, respectively. The reduction to~\autoref{lem:triangle-counting} proceeds in three steps:
\begin{description}
\item[Step 1: Removing hyperpredicates.] Let $\phi_0(x)$ denote the formula obtained from $\phi$ after substituting all occurrences of ternary predicates by $\false$, and let $\psi_0(x) = \counting_{y, z} \phi_0(x, y, z)$. We can compute the differences $\Delta(x) = \psi_0(x) - \psi(x)$ for all $x \in X$ in time $\Order(m)$ by enumerating over all hyperedges in the instance; any tuple $(x, y, z)$ not connected by a hyperedge fulfils $\phi(x, y, z) = \phi_0(x, y, z)$ and therefore does not contribute to $\Delta(x)$. In the following it suffices to compute $\psi_0(x)$ for all $x$, as we can compute $\psi(x) = \psi_0(x) + \Delta(x)$ in time $\Order(m)$.
\item[Step 2: Removing unary predicates.] By enumerating over all possible evaluations of the unary predicates in the instance, we may always restrict the sets $X$, $Y$ and $Z$ to those vertices matching the current evaluation. We may therefore assume that $\psi(x)$ does not contain unary predicates.
\item[Step 3: Removing parallel predicates.] Let $E_1, \dots, E_r$ denote the binary predicates in $\psi$. For vertices $x, y, z$, let $\chi(x, y) = (E_1(x, y), \dots, E_r(x, y)) \in \{0, 1\}^r$ denote the \emph{color} of the edge $(x, y)$ and let $\chi(x, y, z) = (\chi(x, y), \chi(x, z), \chi(y, z)) \in (\{0, 1\}^r)^3$ denote the color of the tuple $(x, y, z)$. We say that a color $\alpha \in (\{0, 1\}^r)^3$ is \emph{satisfying} if $\phi$ evaluates to true after substituting the binary predicates as specified by $\alpha$. We can assume that there is only a single satisfying color $\alpha$, as otherwise the following algorithm is simply repeated for every satisfying color (there are at at most $\Order(1)$ colors). The reduction to \autoref{lem:triangle-counting} is now almost immediate: If $\alpha_1 \neq 0 = (0, \dots, 0)$, then we can remove all edges $(x, y)$ of color different than $\alpha_1$. If $\alpha_1 = 0$, then we convert all edges $(x, y)$ of color $\chi(x, y) \neq 0$ into edges of some non-zero color. After this step, every edge $(x, y)$ has one of the two colors, so we may equivalently introduce a new binary predicate $E$ and assign $E(x, y)$ if and only if $\chi(x, y) \neq 0$. We similarly proceed for $\alpha_2$ and $\alpha_3$ and edges $(x, z)$ and $(y, z)$, respectively. The remaining problem is of the form $\counting_{y, z} \phi(E(x, y), E(x, z), E(y, z))$ for some (uniquely satisfiable) function $\phi$. Finally, we can apply \autoref{lem:triangle-counting} to solve the instance in time $\Order(m^{3/2})$. \qedhere
\end{description}
\end{proof}

\end{document}